\newtheorem{thm}{Theorem}
\newtheorem{lemma}[thm]{Lemma}
\newtheorem{cor}[thm]{Corollary}
\newtheorem{defn}{Definition}
\newtheorem{remark}{Remark}
\newtheorem{ex}{Example}
\newenvironment{proof}{\noindent {\sc Proof:}}{$\Box$ \medskip}
\newcommand{\beginsupplement}{%
        \setcounter{table}{0}
        \renewcommand{\thetable}{S\arabic{table}}%
        \setcounter{figure}{0}
        \setcounter{section}{0}
        \renewcommand{\thefigure}{S\arabic{figure}}%
        \renewcommand{\thesection}{S\arabic{section}}
     }
\title{A Formal Definition of Scale-dependent Complexity and the Multi-scale Law of Requisite Variety}
\date{}
\author[1,2]{Alexander F. Siegenfeld}
\author[2]{Yaneer Bar-Yam}
\affil[1]{Department of Physics, Massachusetts Institute of Technology, Cambridge, MA}
\affil[2]{New England Complex Systems Institute, Cambridge, MA}
\begin{document}

\maketitle

\begin{abstract}
Ashby's law of requisite variety allows a comparison of systems with their environments, providing a necessary (but not sufficient) condition for system efficacy: a system must possess at least as much complexity as any set of environmental behaviors that require distinct responses from the system.  However, to account for the dependence of a system's complexity on the level of detail---or scale---of its description, a multi-scale generalization of Ashby's law is needed.  We define a class of complexity profiles (complexity as a function of scale) that is the first, to our knowledge, to exhibit a multi-scale law of requisite variety.  This formalism provides a characterization of multi-scale complexity and generalizes the law of requisite variety's single constraint on system behaviors to a class of multi-scale constraints.  We show that these complexity profiles satisfy a sum rule, which reflects a tradeoff between smaller- and larger-scale degrees of freedom, and we extend our results to subdivided systems and systems with a continuum of components. 
\end{abstract}

\section{Introduction}
Defining complexity in general terms has been a persistent challenge in the study of complex systems~\cite{gell2002complexity, adami2002complexity, crutchfield2012between, lineweaver2013complexity}.  Scale-dependent complexity---which has been formalized for general collections of random variables~\cite{textbook, ay2006unifying, allen2017multiscale} and time series in particular~\cite{binder2001multiscale, ahmed2011multivariate, wu2013time, humeau2015multiscale} and used in contexts such as chaos~\cite{zunino2012distinguishing, he2018multivariate}, biological signals~\cite{costa2002multiscale, costa2003multiscale, costa2005multiscale, catarino2011atypical, liang2012automatic}, traffic patterns~\cite{wang2013multiscale, yin2016multivariate}, financial data~\cite{martina2011multiscale,alvarez2012multiscale,yin2014weighted}, Gaussian processes~\cite{faes2017multiscale}, and fluid dynamics~\cite{li2014permutation,murayama2018characterization}---offers a promising path: 
rather than attempt to describe the complexity of a system with a single number, it recognizes that the average length of description necessary to specify a system's state (i.e. the system's complexity) depends on the level of detail or scale of that description (see \cref{fig:behaviors}). 

\begin{figure}
\begin{center}
\includegraphics[width=.8\textwidth]{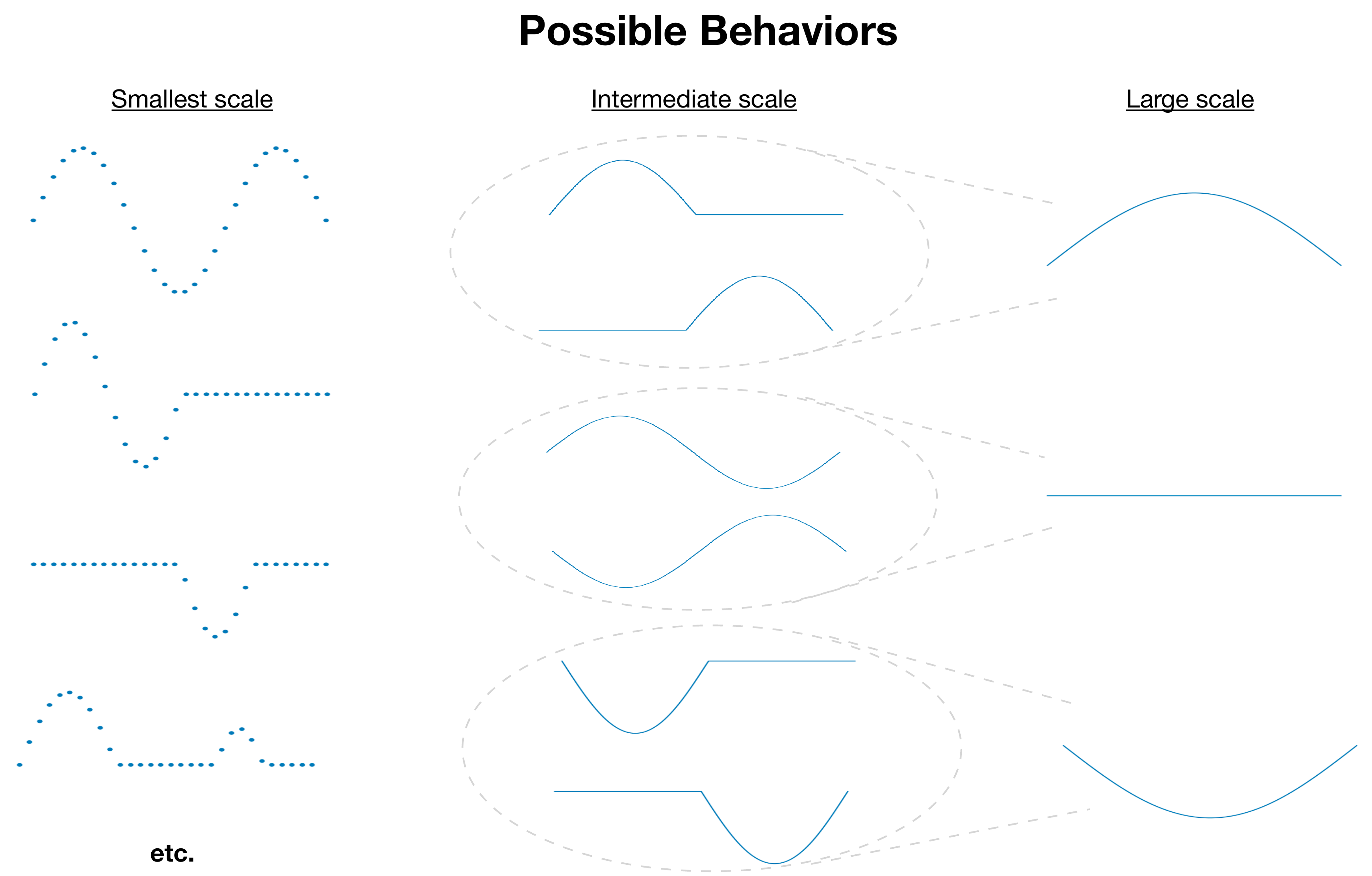}
\caption{The average length of description necessary to specify a system's state depends on its number of possible distinct behaviors.  Which behaviors are distinct depends on the scale/level of detail at which the system is being described.  At the smallest scale, the system depicted has many possible states, but many distinguishable smaller-scale states can all correspond to a single larger-scale state.}
\label{fig:behaviors}
\end{center}
\end{figure}

But, in terms of real-world consequences, what does it mean if one system possesses more complexity than another?  Ashby's law of requisite variety~\cite{ashby1961introduction} provides one possible answer: in order to be effective, the complexity of a system must equal or exceed that of its environment.  Here, a system's environment must be defined to include only that set of behaviors that require a distinct response from the system (see \cref{fig:ashby}).

However, Ashby's law does not account for the multi-scale nature of complexity.  For instance, two individuals who lacked the ability to cooperate may have enough small-scale complexity to independently move various objects but would lack sufficient complexity at the scale necessary to move a couch.  Thus, while the two individuals would possess sufficient complexity (i.e. sufficient variety of behaviors), it would not be at the right scale (i.e. at the level of description that includes only behaviors that involve coordination between the two individuals).  This example is not in violation of Ashby's law, as Ashby's law merely provides a necessary rather than sufficient condition for system efficacy.  Nonetheless, this example and others (see below) motivate us to seek a stronger necessary condition for system efficacy that takes into account not only the complexity  but also the scale of system behaviors.  We thus propose a \textit{multi-scale law of requisite variety}: in order to be effective, the complexity of a system must equal or exceed that of its environment \textit{at all scales}.   Our goal is to provide a definition of scale-dependent complexity---a class of complexity profiles---that satisfies this multi-scale version law of requisite variety.  For a pedagogical introduction to complexity profiles and the multi-scale law of requisite variety, please see ref.~\cite{siegenfeld2020}. 

\begin{figure}
\begin{center}
\includegraphics[width=.5\textwidth]{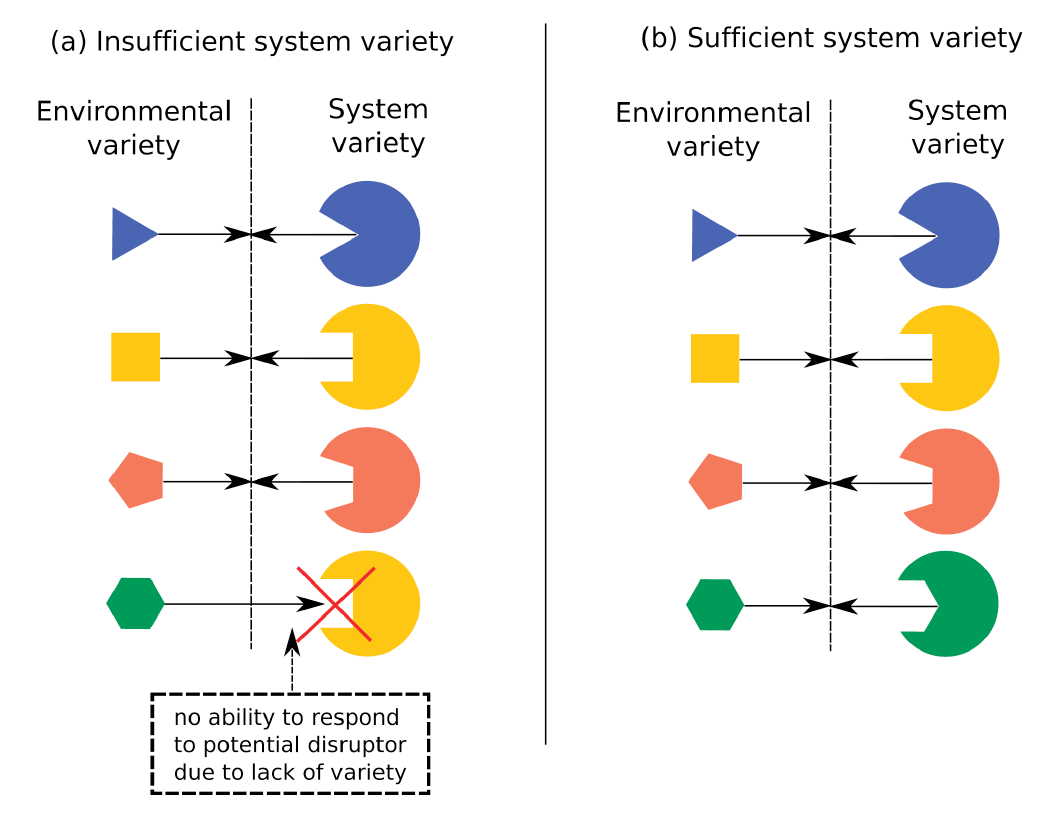}
\caption{An illustration of Ashby's law of requisite variety.  (a) Because the system has fewer states (i.e. lower complexity) than its environment, it is impossible for the system to have a distinct response to each of the four environmental states.  (b) Here, the system is able to have a distinct response to each environmental state; a necessary (but not sufficient) condition for this matching is that the system's complexity equals or exceeds its environment's.  Image source: ref.~\cite{norman2019}.}
\label{fig:ashby}
\end{center}
\end{figure}

For an existing formal definition of a complexity profile~\cite{original,allen2017multiscale}, it has been proven that a multi-scale law of requisite variety applies for systems and environments that are \textit{block-independent}, i.e. systems/environments whose components can be partitioned into mutually independent blocks such that components within the same block have identical behavior~\cite{multivariety}.  However, the law of requisite variety does not apply to this complexity profile more generally.  For instance, adding additional components to the system (without changing the existing components) can actually reduce this complexity profile at larger scales due to the possibility of negative interaction information for more than two variables~\cite{stronge}; thus, a system that is capable of effectively interacting with its environment could nonetheless end up with less complexity than its environment at larger scales.  Given the desirability and usefulness of a complexity profile satisfying Ashby's law at each scale (a property that has been implicitly used in many analyses, such as management~\cite{rosenkranz2011variety, mckelvey2012organisations, gorod2019}, military defense~\cite{ryan2008bears,galway2011tasking,norman2019}, governance~\cite{huang2022democracies}, multi-agent coordination~\cite{alexiou2011understanding, mahmoodi2018complexity, salas2021social}, and evolutionary dynamics~\cite{derosa2007combined, gershenson2011sigma, stacey2015multiscale}), we therefore seek a formal definition of the complexity profile that reflects this property.   (We will show that for block-independent systems, the formalism introduced here can be reduced to the definition of complexity profiles discussed above.)  

The one other constraint that we desire for a complexity profile is a sum rule, i.e. that the area under the complexity profile does not depend on interdependencies between components but rather only the individual components' behaviors.   Such a constraint reflects the tradeoff between complexities at various scales: in order for a system to have complexity at larger scales, its components must be correlated, which constrains the fine-scale configurations of the system (and thus its smaller-scale complexity)~\cite{siegenfeld2020}.  Without a sum rule or some other similar constraint, the multi-scale law of requisite variety would be no more than many copies of the single-scale version of Ashby's law---one for each scale---with no structure relating the various scales to each other. 

In order to define a complexity profile for which the multi-scale law of requisite variety holds, we first have to define what it means for a multi-component system to effectively match its environment (\cref{sec:gen}).  Then, in \cref{sec:prof}, we formally define what constitutes a complexity profile and what criterion must be satisfied for it to capture the multi-scale law of requisite variety and the tradeoff between complexities at various scales.  In \cref{sec:class}, we define a class of complexity profiles that satisfy such criteria and examine some of its properties.   

Such a class does not provide a single complexity profile; rather, a complexity profile is assigned for each way of partitioning the system.   Choosing a method of partitioning the system is analogous to choosing a coordinate system onto which the multi-scale complexity of the system can be projected (breaking the permutation symmetry corresponding to the relabeling of system components). The fact that the profile depends on the partitioning method reflects the fact that there is no single way to coarse-grain a system, although some coarse-graining choices are more useful/better reflect system structure than others.  However, the efficacy of a system in a particular environment is independent of the choice of coordinate system; thus, regardless of which partitioning scheme is chosen, an effective system will have at least as much complexity as the environment at all scales.  This formalism therefore gives an entire class of constraints that must be satisfied: as long as the system and its environment are partitioned/coarse-grained in the same way, the system's complexity matching/exceeding its environment's at all scales provides a necessary condition for system efficacy.   

\section{Generalizing Ashby's Law to Multiple Components}
\label{sec:gen}
Ashby's law claims that to effectively regulate an environment, the system must have a degree of freedom or behavior for each distinct environmental behavior.  In other words, there cannot be two environmental states for a given system state.  It then follows (by the pigeon-hole principle) that the number of behaviors of the system must be greater than or equal to the number of behaviors of the environment.   

More formally, let $X$ and $Y$ be random variables or collections of random variables, let $H(X)$ denote the Shannon entropy of $X$, which is the minimum average number of bits needed to describe the state of $X$, and let $H(Y|X)$ denote the expected value of the Shannon entropy of $Y$ given the state of $X$~\cite{cover2012elements}.   Then, each system state of $X$ corresponding to no more than one environmental state of $Y$ can be written as $H(Y|X)=0$, where $H(Y|X)$ denotes the expected value of the Shannon entropy of $Y$ given knowledge of $X$,  from which it follows that $H(X)\geq H(Y)$ (i.e. the complexity of the environment can not exceed that of the system).  .
It is important to note that in this formulation, the environment $Y$ is \textit{defined} to be the set of states that require distinct behaviors of the system.  Two environmental states that do not require different system behaviors should be represented by a single state of $Y$.

In order to consider multi-scale behavior, let us describe the system $X$ as consisting of $N$ components such that $X=\{x_1,...,x_N\}$. 

\begin{defn}
\label{def:sys}
A \emph{system} $X$ of size $N=|X|$ is defined as a set of $N$ random variables.  These random variables are referred to as \emph{components} of the system.
\end{defn}

\begin{figure}
\begin{center}
\includegraphics[width=.5\textwidth]{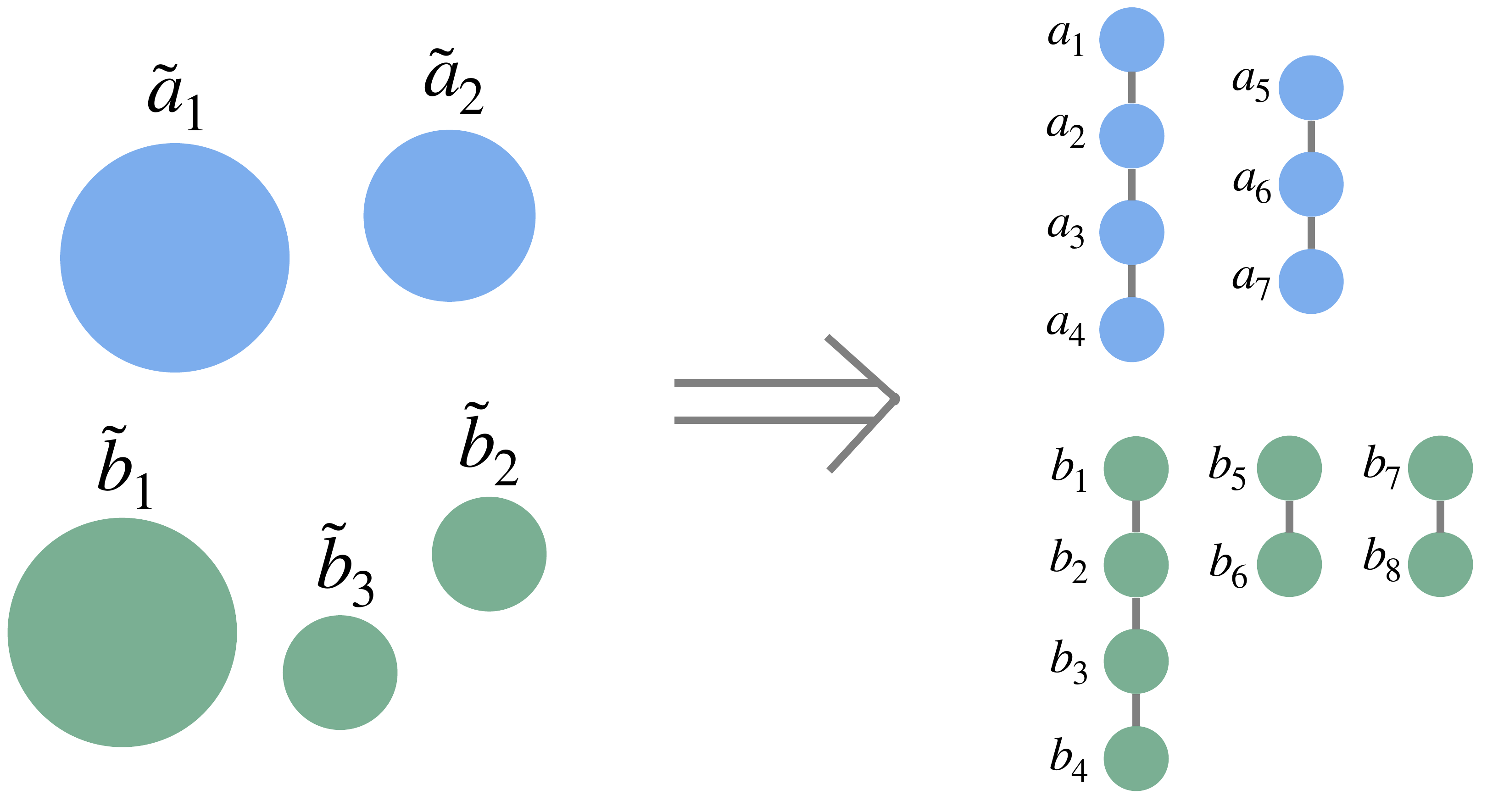}
\caption{Accounting for components of various sizes.  All systems can be described by components that are all of the same size.  For instance, two systems $\{\tilde a_1, \tilde a_2\}$ and $\{\tilde b_1, \tilde b_2, \tilde b_3\}$ that contain components of sizes $2s$, $1.5s$, and $1s$ (where the units of $s$ depend on the notion of size being used) can be reformulated in terms of components $\{a_1,...,a_7\}$ and $\{b_1,...,b_8\}$ that all have size $0.5s$, where $a_1=a_2=a_3=a_4=\tilde a_1$, $a_5=a_6=a_7=\tilde a_2$, $b_1=b_2=b_3=b_4=\tilde b_1$, $b_5=b_6=\tilde b_2$, and $b_7=b_8=\tilde b_3$.}
\label{fig:sizes}
\end{center}
\end{figure} 

\begin{remark}
In this formulation, all components of one or more systems are treated as the same size; while this condition may seem like a limitation, any system or systems can be described in this way to arbitrary precision: components of different sizes can be accounted for by defining a new set of components whose size is the greatest common factor of the sizes of the original components (irrational relative sizes---for which no greatest common factor exists---can be approximated to arbitrary precision by rational relative sizes).  If the new components are all of size $l$, each original component $\tilde x_i$ of size $l_i$ can then be replaced with $l_i/l$ new components $x_{j+1},x_{j+2}...x_{j+l_i/l}$ for which the state of one of these $l_i/l$ variables completely determines the state of all the others, i.e. $x_{j+1}=x_{j+2}=...=x_{j+l_i/l}$ (see e.g. \cref{fig:sizes}). 
\end{remark}

The assumption that the system $X$ must have at least one distinct response for each environmental state $Y$ (i.e. $H(Y|X)=0$) is generalized as follows: an ``environmental component'' $y_i$ is defined for each system component $x_i\in X$, such that each $y_i$ is a random variable representing the environmental states that require a distinct response from the system component $x_i$.  Then, for the system to effectively interact with its environment, $H(y_i|x_i)=0$ for each $i$, i.e. there cannot be two environmental component states for a given state in the corresponding system component.  (Note that this condition is necessary but not sufficient for the system to effectively interact with its environment---just because the system components can choose a different response for each environmental condition does not guarantee that the responses are appropriate.)   Letting $Y=\{y_1,...,y_N\}$, we see that $H(y_i|x_i)=0$ implies $H(Y|X)=0$ and, thus $H(y_i|x_i)=0$ is a stronger condition: not only must the system match the environment overall, but this matching must be properly organized in a specific way.  This formulation allows for constraints among the environmental components to induce constraints among the system components. 

Note that each environmental component represents the space of behaviors that are required by its corresponding system component, which as shown in \cref{fig:match}, can correspond to multiple physical parts of the environment.  Depending on how the system is connected to the environment---or depending on which constraints arising from interactions between the system and its environment are to be examined---the environmental components may need to be defined differently.  Although it may not seem so at first, any interaction between a system and its environment can be formulated as above: if we start with a more general formulation in which each system component $x_i$ interacts with environmental components $\tilde y_{i_1},\tilde y_{i_2},...,\tilde y_{i_{n_i}}$ (i.e. $\forall i,j,\ H(y_{i_j}|x_i)=0 $), which allows for each system component to interact with multiple environmental components and vice versa, then we can redefine the environmental components such that each $x_i$ is associated with the random variable $y_i\equiv(\tilde y_{i_1},\tilde y_{i_2},...,\tilde y_{i_{n_i}})$ (see \cref{fig:match} for an example).  This redefinition may result in new entanglements among the environmental components.

\begin{figure}
\begin{center}
\includegraphics[width=.5\textwidth]{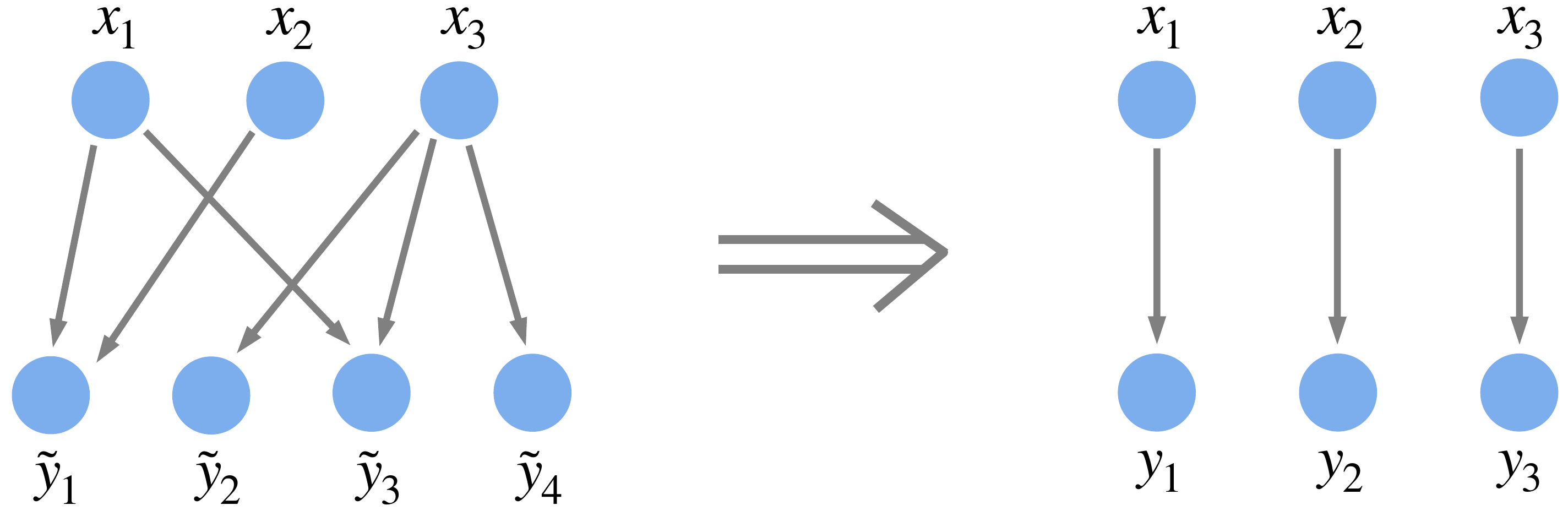}
\caption{Defining the environmental components.  Regardless of the interactions between a system and its environment, environmental components can always be defined such that they have a one-to-one relationship with system components.  For instance, suppose that for the system $\{x_1,x_2,x_3\}$ to effectively interact with its environment $\{\tilde y_1,\tilde y_2, \tilde y_3,\tilde y_4\}$, $x_1$ must have a distinct response for each possible state of $\tilde y_1$ and $\tilde y_3$ (i.e. $H(\tilde y_1|\tilde x_1)=0$ and $H(\tilde y_3|\tilde x_1)=0$), $x_2$ must have a distinct response for each possible state of $\tilde y_1$ (i.e. $H(\tilde y_1|x_2)=0$, and $x_3$ must have a distinct response for each possible state of $\tilde y_2$, $\tilde y_3$, and $\tilde y_4$ (i.e. $H(\tilde y_2|x_3)=0$,~~$H(\tilde y_3|x_3)=0$, and $H(\tilde y_4|x_3)=0$))).  If we define new environmental components $y_1\equiv(\tilde y_1, \tilde y_3)$, $y_2\equiv \tilde y_1$, and $y_3\equiv (\tilde y_2, \tilde y_3, \tilde y_4)$, we see that $x_1$ must react only to $y_1$, $x_2$ only to $y_2$, and $x_3$ only to $y_3$, since the original constraints are equivalent to the constraints $H(y_i|x_i)=0$ for $i\in\{1,2,3\}$.}
\label{fig:match}
\end{center}
\end{figure}

\begin{defn}
\label{def:env}
An \emph{environment} $(Y,f)$ for system $X$ is a system $Y$ together with a bijection $f:Y\rightarrow X$. 
\end{defn}

\begin{defn}
\label{def:match}
A system $X$ \emph{matches} its environment $(Y,f)$ iff $H(y|f(y))=0$ for all $y\in Y$.
\end{defn}

\begin{ex}
Consider a system of two thermostats, $x_1$ in room 1 and $x_2$ in room 2, that can each be either on or off.   The environment can be described by two variables $y_1$ and $y_2$ that represent whether or not room 1 or room 2, respectively, should be heated (the bijection $f$ mapping $y_1$ to $x_1$ and $y_2$ to $x_2$).  In order for the system to match the environment, it must be that $H(y_1|x_1)=H(y_2|x_2)=0$.  Thus if the need for each room to be heated is independent of that of the other room, the thermostats must be able to operate independently of one another; likewise if the two rooms' need for heat are correlated, the thermostats must also be correlated.\footnote
{Note that this formalism has nothing to say regarding causation.  It may be that the objective is to heat both rooms at the same time or not at all, in which case the thermostats themselves must be connected.  Or it may be that it just so happens that the two rooms get cold at the same time, in which case two disconnected thermostats may nonetheless exhibit correlated behavior.}
\end{ex}

\begin{ex}
Consider a system $X$ in which each $x_i\in X$ represents some aspect of policy (e.g. educational policy) being applied in region $i$ of a given country (the regions could, for instance, be towns/cities).  The environment $(Y,f)$ could be defined by the random variables $y_i$ (where $f(y_i)=x_i$), such that each $y_i$ corresponds to conditions in region $i$ that require a distinct policy in order for the region to be effectively governed.  If the $y_i$ vary independently of one another, while the $x_i$ cannot, then the system $X$ will not be able to match its environment $(Y,f)$ (e.g. an education policy that is determined entirely at the national level will not be able to effectively interact with locales if each locale has specific educational needs).  Conversely, if there are correlations among the $y_i$ that are lacking in the $x_i$, the system will also be unable to match its environment (e.g. it would be ineffective for each city to independently set its own policy with respect to international trade or with respect to regulating a national corporation that spans many cities).
\end{ex}

Note that the possible states of a system or the probabilities assigned to these states cannot be defined without specifying the environment with which the system is interacting, for the same system may behave differently in different environments.  (Alternatively, each individual environment need not be treated separately; if each individual environment is assigned a probability, this ensemble of possible individual environments can itself be treated as a single environment of the system.)  In either case, this formalism concerning a system matching its environment is purely descriptive and does not require the specification of the mechanism by which the system and the environment are related.

\begin{ex}
Returning to the example of the two thermostats, if the system (the thermostats) and the environment (the rooms) are connected so that the state of thermostat $i$ depends directly on the state of room $i$, then the thermostat states will have precisely as much correlation as the room states do.  The thermostat states will be independent random variables if and only if the room states are.  
\end{ex}

With \cref{def:match}, we have a characterization of  Ashby's law that takes into account the multi-scale structure of a system and its connection with its environment.  The goal is then to understand how properties of the environment constrain the corresponding properties of the system.  If an environment has a certain property and it is known that the system matches the environment, what must be true about the system?  For the single-scale case of Ashby's law, the system must have at least as much information as the environment.  The complexity profile, described below, generalizes this property to multiple scales.  In particular, it allows us to formulate the multi-scale law of requisite variety: in order for a system to match its environment, it must have at least as much complexity as its environment \textit{at every scale}.

\section{Defining a complexity profile}
\label{sec:prof}
The basic version of Ashby's law states that for a system $X$ to match its environment $Y$, the overall complexity of $X$ must be greater than or equal to the overall complexity of $Y$.  But, as argued in section 2.3 of ref.~\cite{siegenfeld2020}, it does not make sense to speak of complexity as a single number but rather the complexity of a system must depend on its scale. Thus, we wish to generalize the notion of a complexity profile such that the complexity of a system and its environment can be compared at multiple scales. 

\begin{defn}
\label{def:genc}
A complexity profile $C_X(n)$ of a system $X$ assigns a particular amount of information to the system at each scale $n\in\mathbb{Z}^+$.  For $n>|X|$, we define $C_X(n)=0$.  If we wish to consider each component of the system to be of size $l$, we can define a continuous version of the complexity profile (see \cref{sec:continuum} for more detail):
\begin{equation}
\label{eq:ctilde}
\tilde C_X(s)=C_X(\lceil s/l \rceil)
\end{equation}
\end{defn}

We wish for a complexity profile to have two additional properties:  it should (1) manifest the \textit{multi-scale law of requisite variety} and (2) obey the \textit{sum rule}.   Each property is defined below, with applications/examples given in sections 2.5 and 2.4 of ref.~\cite{siegenfeld2020}, respectively. 

\begin{defn}
\label{def:mono}
A complexity profile manifests the multi-scale law of requisite variety if, for any two systems $X$ and $Y$, $X$ matching $Y$ (per \cref{def:match}) implies that $C_X(n)\geq C_Y(n)$ for all $n$.  
\end{defn}

\begin{defn}
\label{def:sum}
A complexity profile obeys the sum rule if for any system $X$, 
$\sum_{n=1}^\infty C_X(n)=\sum_{x\in X}H(x)$.
\end{defn}

The multi-scale law of requisite variety is important because it allows for the interpretation that a necessary (but not sufficient) condition for a system to effectively interact with its environment must be that it has at least as much complexity as the environment at every scale.  The sum rule is important because it captures the intuition that for a system composed of components with the same individual behaviors, there is a tradeoff among the complexities of the system at various scales, since complexity at larger scales requires constraints among the system's smaller-scale degrees of freedom.

Note that examining measures of multi-scale complexity can never prove that a system matches its environment---just as in the single-scale case, a system having more complexity than its environment by no means guarantees that every system state corresponds to a single environmental state (nor that the state adopted by the system will be appropriate).  But examining multi-scale measures of information \textit{can} prove the impossibility of compatibility.  The goal then, in formulating multi-scale measures, is to create more instances in which the impossibility of compatibility can be shown.  Using this multi-scale formalism, the system must now possess more complexity than its environment \textit{at all scales}, not just more complexity than its environment overall.  For instance, an army of ants may have more fine-grained complexity than its environment but will be able to perform certain tasks (e.g. moving large objects) only with larger-scale coordination between the ants.

\section{A class of complexity profiles}
\label{sec:class}
In \cref{sec:prof}, we have defined the term \textit{complexity profile} and have described general properties that any complexity profile should have.  We now describe a specific class of complexity profiles that satisfy these properties.  This class of profiles is not the only such class and may not be the best one, but it serves as an instructive example and provides one useful way of characterizing multi-scale complexity.  

\begin{figure}
\begin{center}
\includegraphics[width=.5\textwidth]{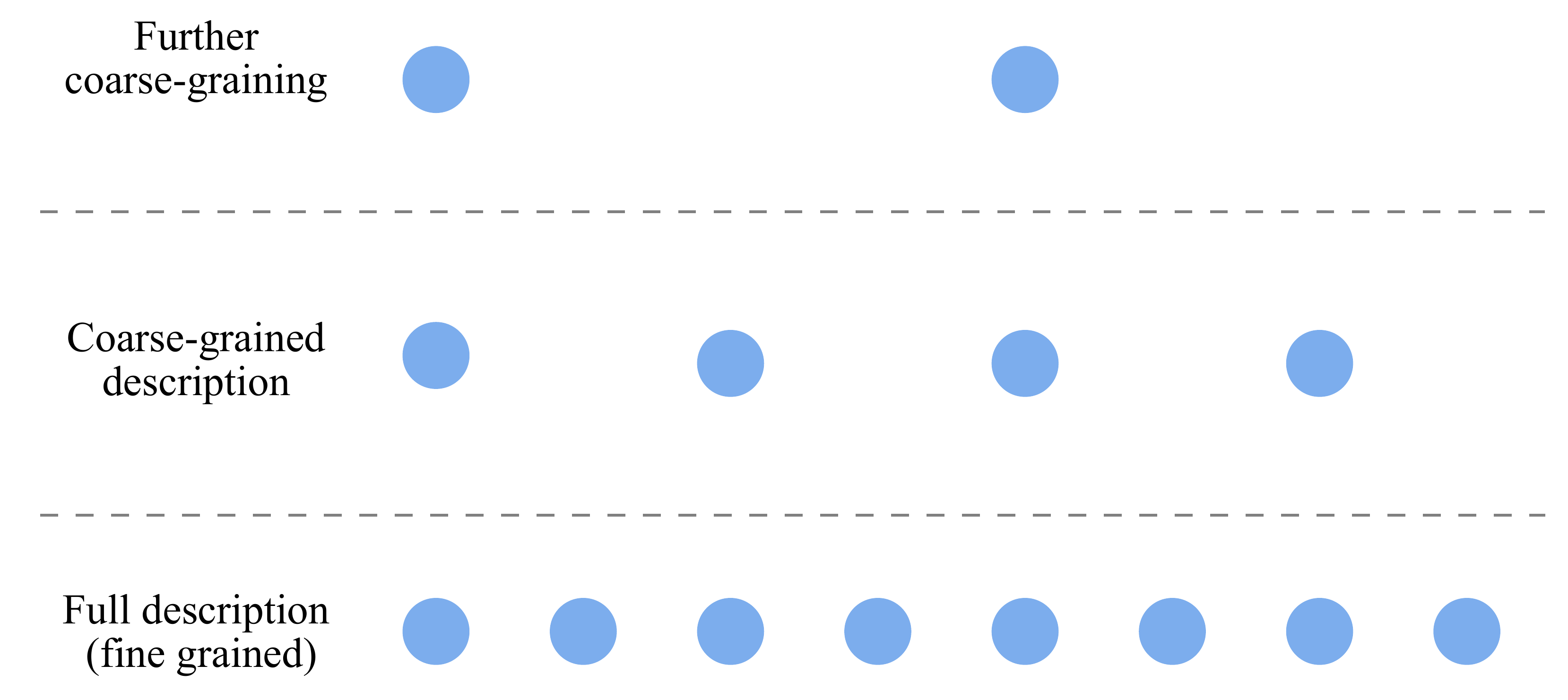}
\caption{Larger-scale/coarse-grained descriptions.  Consider a system with eight components.  The full description (scale 1) consists of all eight components.  A coarse-grained description (scale 2) might consist of every other component, which can serve as an approximation for the system as a whole.  A further coarse-grained description (scale 4) might consist of every other component of the scale-2 description.}
\label{fig:coarse-graining}
\end{center}
\end{figure} 

One way to define a large-scale or coarse-grained description of a system is to allow only a subset of the components of the system to be described.\footnote
{This coarse-graining scheme is analogous to the decimation approach for implementing the position-space renormalization group in physics.}
As a first pass, one might divide the system into $n$ equivalent disjoint subsets and then define the information in the description of the system at scale $n$ to simply be the information in one of the subsets (see e.g. \cref{fig:coarse-graining}).  However, given that the partition into $n$ equivalent subsets may not be possible (either due to heterogeneity in the components or because the system size is not divisible by $n$), this definition can be generalized by averaging over the the information in each of the $n$ subsets.  

\begin{ex}
Consider a Markov chain $(x_1,x_2,x_3,...)$ (for finite Markov chains of size $N$, simply let $x_i\equiv 0$ for $i> N$).  A set of disjoint, coarse-grained descriptions of the Markov chain at scale $n$ could be $$\{(x_1,x_{1+n},x_{1+2n},x_{1+3n},...),~(x_2,x_{2+n},x_{2+2n},x_{2+3n},...),~(x_3,x_{3+n},x_{3+2n},x_{3+3n},...),...,~(x_n,x_{2n},x_{3n},...)\}$$
Thus, the information at scale $n$ of the Markov chain could be defined as 
\begin{equation}
\frac{1}{n}\sum_{i=1}^n H(\{x_j|j\equiv i\mod n\})
\end{equation}
Note, however, that this sequence of descriptions is not nested, and so cannot be used in its entirety in definitions~\ref{def:s}~and~\ref{def:c}.
\end{ex}

First we must define how to successively partition the system.   We only allow for nested sequences of partitions, so that larger-scale descriptions of the system cannot contain information that smaller-scale descriptions lack.  The way in which a system is partitioned defines a sequence of descriptions of the system, and thus different partitioning schemes can be thought of as different nested ontologies with which to create successively coarser descriptions of the system.  This formulation allows for a  general framework for describing a system at multiple scales, given the constraint that nested, successively larger-scale descriptions of a system correspond to nested subsets of the system that are decreasing in size.

\subsection{Definition}
We now formally define this class of complexity profiles.  To do so, we first build up some notation for defining nested sequences of partitions:

\begin{defn}
\label{def:nps}
Define $P=\{P_i\}_{i=1}^\infty$ to be a \emph{nested partition sequence} of a set $X$ if each $P_i$ is a partition of $X$, $P_i\leq P_j$ (i.e. $P_i$ is a refinement of $P_j$) whenever $i>j$, and $P_i<P_j$ (i.e. $P_i$ is a strict refinement of $P_j$) whenever $|X|\geq i>j$.  
\end{defn}

Note that, in order for the strict refinement clause of this definition to be satisfied (i.e. for $P_i$ to have more parts than $P_j$ whenever $|X|\geq i>j$), it must be that $P_n$ contains $n$ parts for $n\leq |X|$ and $P_n=P_{|X|}$ for $n>|X|$, since a partition of $X$ cannot have more than $|X|$ parts.  


\begin{ex}
\label{ex:4p}
Let $X=\{x_1,x_2,x_3,x_4\}$.  An example of a nested partition sequence of $X$ is $P=(P_1,P_2,P_3,P_4,P_5,...)=(\{\{x_1,x_2,x_3,x_4\}\},~~\{\{x_1,x_3\},\{x_2,x_4\}\},~~\{\{x_1,x_3\},\{x_2\},\{x_4\}\},~~\{\{x_1\},\{x_2\},\{x_3\},\{x_4\}\},~~\{\{x_1\},\{x_2\},\{x_3\},\{x_4\}\},...)$ 
\end{ex}

\begin{defn}
\label{def:s}
Given a nested partition sequence $P$ of a system $X$, \newline
define $\tilde{S}_X^P(n)\equiv nS^P_X(n)\equiv \sum_{\chi \in P_n} H(\chi)$ for $n\in\mathbb{Z}^+$. 
\end{defn}

Note that $\tilde{S}^P_X(n)$ is non-decreasing in $n$ and captures the total (potentially overlapping) information of the system parts, while $S^P_X(n)$ is the average amount of information necessary to describe one of the $n$ parts.  Information that is $n$-fold redundant (i.e. is of scale $n$) can be counted up to $n$ times in $\tilde{S}^P_X(n)$---it is this fact that motivates the following definition of a complexity profile. 

\begin{defn}
\label{def:c}
Given a nested partition sequence $P$ of a system $X$, the complexity profile $C^P_X(n):\mathbb{Z}^+\rightarrow [0,\infty)$ is defined as $C^P_X(n)=\tilde{S}^P_X(n)-\tilde{S}^P_X(n-1)$, with the convention that $\tilde{S}^P_X(0)=0$.
\end{defn}

\begin{remark}
\label{lem:C}
For $n=1$, $C^P_X(n)=H(X)$.  For $n>|X|$, $C^P_X(n)=0$.  And for $1<n\leq |X|$, $C^P_X(n)=H(A)+H(B)-H(A,B)= I(A;B)$ where $A$ and $B$ are the two subsets of $X$ that are elements of $P_n$ but not of $P_{n-1}$, and where $I$ denotes mutual information.  Thus, this complexity profile is very computationally tractable. 
\end{remark}

\begin{ex}
Using the nested partition sequence given in \cref{ex:4p} of $X=\{x_1,...,x_4\}$, if $x_i$ are unbiased bits, $x_1=x_2$, and $x_1,x_3,x_4$ are mutually independent, we have $\tilde{S}_X^P(1)=H(X)=3$, $\tilde{S}_X^P(2)=H(x_1,x_3)+H(x_2,x_4)=4$, and $\tilde{S}_X^P(n)=4$ for $n>2$.  Thus $C^P_X(1)=3$, $C^P_X(2)=1$, and $C^P_X(n)=0$ for $n>2$.
\end{ex}

\begin{ex}
Consider a system $X$ of $N$ molecules, the velocities of which are independently drawn from a Maxwell-Boltzmann distribution for which the temperature $T$ is itself a random variable.  Consider a nested partitioning scheme in which at each step the largest remaining part (or, in the case of a tie, one of the largest remaining parts) is divided as equally as possible in two.  The resulting complexity profile will then have $C(1)=H(X)$ and $C(n)=H(T)$ for $1<n<<N$, since for $n<<N$, the size of the parts will be large enough so that $T$ can be almost precisely determined from any single part.  As $n$ approaches $N$, a measurement of any single part will yield more and more uncertainty regarding the value of $T$ and so $C(n)$ will slowly decay from $H(T)$ to $0$.  Such a complexity profile captures the fact that at the smallest scale, there is a lot of information related to the microscopic details of each molecule, but at a wide range of larger intermediate scales, the information present is much smaller and roughly constant, arising only from the common large-scale influence that temperature has across the system.
\end{ex}

\subsection{The multi-scale law of requisite variety and the sum rule}

This complexity profile roughly captures the notion of redundancy and will satisfy the properties described in definitions~\ref{def:mono} and~\ref{def:sum} (as proved below).  It is dependent on the particular set of partitions used---a reflection of the fact that there are multiple ways to coarse-grain a system---and thus will not capture the redundancies present in an absolute sense, as the complexity profile described in refs.~\cite{original, allen2017multiscale} does.  But that complexity profile, while it does obey the sum rule, does not manifest the multi-scale law of requisite variety.  Thus, while it characterizes the information structure present in a system, it does not allow us to compare a system to its environment in a mathematically rigorous way.  The class of complexity profiles considered here allows this comparison by requiring that the system and the environment be partitioned in the same way, breaking permutation symmetry and accounting for the correspondence between system and environmental components.

\begin{thm}
\label{thm:match}
Multi-scale law of requisite variety.  If a system $X$ matches its environment $(Y,f)$, then for all nested partition sequences $P$ of $X$, $C^P_X(n)\geq C^{P^f}_Y(n)$ at each scale $n$, where $P^f$ is the corresponding nested partition sequence of $Y$ (see definition~\ref{def:fp} below).
\end{thm}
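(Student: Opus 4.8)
The plan is to prove the inequality scale by scale, reducing the substantive case to the data-processing inequality for mutual information via Remark~\ref{lem:C}. Recall from Definition~\ref{def:fp} that $P^f$ is the pullback of $P$ along the bijection $f$, i.e.\ $P^f_n=\{f^{-1}(\chi):\chi\in P_n\}$; since $f$ is a bijection this is again a nested partition sequence of $Y$ with the same number of parts at every scale, and $P^f_n$ is obtained from $P^f_{n-1}$ by splitting $f^{-1}(A)\cup f^{-1}(B)$ into $f^{-1}(A)$ and $f^{-1}(B)$ precisely when $P_n$ is obtained from $P_{n-1}$ by splitting $A\cup B$ into $A$ and $B$.

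First I would dispatch the trivial scales. For $n>|X|=|Y|$ both $C^P_X(n)$ and $C^{P^f}_Y(n)$ vanish by Remark~\ref{lem:C}. For $n=1$, Remark~\ref{lem:C} gives $C^P_X(1)=H(X)$ and $C^{P^f}_Y(1)=H(Y)$; since $X$ matches $(Y,f)$ we have $H(y|f(y))=0$ for every $y\in Y$, so $H(Y|X)\leq\sum_{y\in Y}H(y|X)\leq\sum_{y\in Y}H(y|f(y))=0$, whence $H(X)=H(X,Y)\geq H(Y)$ --- this is exactly single-scale Ashby. For $1<n\leq|X|$, let $A,B$ be the two parts of $P_n$ whose union is a part of $P_{n-1}$, and set $A'=f^{-1}(A)$, $B'=f^{-1}(B)$. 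By Remark~\ref{lem:C} and the description of $P^f$ above, $C^P_X(n)=I(A;B)$ and $C^{P^f}_Y(n)=I(A';B')$, so it remains to show $I(A;B)\geq I(A';B')$.

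The key step is that matching makes the joint variable $A'$ a deterministic function of the joint variable $A$: every $y\in A'$ satisfies $f(y)\in A$ and $H(y|f(y))=0$, so $H(A'|A)\leq\sum_{y\in A'}H(y|A)\leq\sum_{y\in A'}H(y|f(y))=0$, and symmetrically $H(B'|B)=0$. Hence $A'-A-B$ and $A'-B-B'$ are Markov chains (conditioning on $A$ fixes $A'$; conditioning on $B$ fixes $B'$), and two applications of the data-processing inequality give $I(A';B')\leq I(A';B)\leq I(A;B)$, which is the desired inequality at scale $n$ and completes the proof.

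I expect the only real obstacle to be bookkeeping rather than substance: one must check carefully that $P^f$ is set up so that its refinement at each scale $n$ is exactly the $f^{-1}$-image of the refinement of $P$ at that scale, so that Remark~\ref{lem:C} can be applied on the $Y$ side part-for-part with the $X$ side. It is tempting to instead compare $\tilde S^P_X(n)$ with $\tilde S^{P^f}_Y(n)$ directly, but matching only yields $H(f^{-1}(\chi))\leq H(\chi)$ for each part $\chi$, which controls the $\tilde S$ values but not their successive differences $C^P_X(n)=\tilde S^P_X(n)-\tilde S^P_X(n-1)$; it is precisely the mutual-information form of Remark~\ref{lem:C} --- equivalently, the observation that $\tilde S^P_X(n)-\tilde S^{P^f}_Y(n)$ is non-decreasing in $n$, with the data-processing step supplying each increment --- that makes the argument go through.
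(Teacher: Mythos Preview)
Your proof is correct and follows essentially the same route as the paper: reduce via Remark~\ref{lem:C} to comparing $I(A;B)$ with $I(A';B')$, show $H(A'|A)=H(B'|B)=0$ from the componentwise matching condition, and conclude the mutual-information inequality. The only cosmetic difference is that the paper packages the last step as a standalone lemma proved by a direct entropy decomposition ($I(A;B)=I(A';B')+I(A;B|A',B')+I(A;B'|A')+I(B;A'|B')$), whereas you invoke two applications of the data-processing inequality; these are equivalent justifications of the same inequality.
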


\begin{proof}
See \cref{sec:proofs}.
\end{proof}

\begin{defn}
\label{def:fp}
Given a nested partition sequence $P$ of a set $X$ and a bijection $f:Y\rightarrow X$, define $P^f$ to be the nested partition sequence of $Y$ such that $\forall n\in\mathbb{Z}^+$, $y_1,y_2\in Y$ belong to the same part of $P^f_n$ iff $f(y_1),f(y_2)\in X$ belong to the same part of $P_n$.
\end{defn}

One of the advantages of having the complexity profile depend on the partitioning scheme is that theorem~\ref{thm:match} holds for all possible nested partition sequences of the system, assuming its environment is partitioned in the same way.  In other words, regardless of how the partitions are used to define scale, the system must have at least as much complexity as its environment at all scales, so long as scale is defined in the same way for the system and its environment. 

Furthermore, not only must all possible complexity profiles of the system match the corresponding complexity profile of the environment, but all possible complexity profiles of all possible subsets of the system must match the corresponding complexity profile of the corresponding subset of the environment, as stated in the following corollary to theorem~\ref{thm:match}.  This is a powerful statement, since it implies that not only must the system have at least as much complexity as its environment at all scales, but also that subdivisions within the system must be aligned with the corresponding subdivisions within the environment (see section 2.6 of ref.~\cite{siegenfeld2020}).

\begin{cor}
\label{cor:sm}
Subdivision matching.  Suppose a system $X$ matches its environment $(Y,f)$. Then for any subsets $Y'\subset Y$ and $X'=f(Y')\subset X$, $C^P_{X'}(n)\geq C^{P^f}_{Y'}(n)$ at each scale $n$ for all nested partition sequences $P$ of $X'$.  
\end{cor}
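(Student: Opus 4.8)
The plan is to deduce \cref{cor:sm} directly from \cref{thm:match} by exploiting the fact that both the matching relation and the complexity profile $C^P$ are ``local'': they are built component-by-component, so passing to a subset of the components neither creates new obligations nor destroys the ones that remain. Concretely, I would set $f' \equiv f|_{Y'}\colon Y' \to X'$. Since $f$ is a bijection and $X' = f(Y')$, the restriction $f'$ is again a bijection, so $(Y', f')$ is an environment for $X'$ in the sense of \cref{def:env}, and the object written $C^{P^f}_{Y'}$ in the statement is to be read as $C^{P^{f'}}_{Y'}$, with $P^{f'}$ obtained from the nested partition sequence $P$ of $X'$ via \cref{def:fp} applied to the bijection $f'$.

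The first substantive step is to check that $X'$ matches $(Y', f')$. By hypothesis $X$ matches $(Y,f)$, i.e.\ $H(y \mid f(y)) = 0$ for every $y \in Y$ (\cref{def:match}); restricting attention to $y \in Y' \subset Y$ and using $f'(y) = f(y)$ there, we get $H(y \mid f'(y)) = 0$ for all $y \in Y'$, which is precisely the statement that $X'$ matches $(Y', f')$. This uses only that \cref{def:match} is a conjunction of conditions indexed by individual components, so dropping components can only drop constraints.

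With this in hand, I would apply \cref{thm:match} to the system $X'$, its environment $(Y', f')$, and an arbitrary nested partition sequence $P$ of $X'$; the theorem yields $C^P_{X'}(n) \ge C^{P^{f'}}_{Y'}(n)$ at every scale $n$, which is exactly the claimed inequality. The only thing that calls for a moment's care --- and the nearest thing here to an obstacle --- is the notational bookkeeping relating $P^f$ in the corollary to $P^{f'}$ from the definition: one has to confirm that the nested partition sequence of $Y'$ produced by \cref{def:fp} from the restricted bijection $f'$ is indeed the intended one, namely the one in which $y_1, y_2 \in Y'$ share a part of $P^{f'}_n$ iff $f(y_1), f(y_2)$ share a part of $P_n$. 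This is immediate from \cref{def:fp}, and once it is noted the corollary follows with no further work.
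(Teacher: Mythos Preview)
Your proposal is correct and follows essentially the same route as the paper: verify that the restriction $f'=f|_{Y'}$ makes $(Y',f')$ an environment for $X'$ that $X'$ matches (since \cref{def:match} is componentwise), and then invoke \cref{thm:match}. The paper's proof is simply a terser version of what you wrote, and your extra care about the notational identification of $P^{f'}$ with the object written $P^f$ in the statement is apt but does not change the argument.
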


\begin{proof}
Since $X$ matches $Y$, $X'$ matches $Y'$.  Therefore theorem~\ref{thm:match} applies to $X'$ and $Y'$.
\end{proof}

We now state and prove the sum rule:

\begin{thm}
\label{thm:sum}
Sum rule.  For any system $X$ and all nested partition sequences $P$ of $X$, $\sum_{n=1}^\infty C^P_X(n)=\sum_{x\in X} H(x)$
\end{thm}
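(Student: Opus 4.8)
The plan is to exploit the telescoping structure built into \cref{def:c} and then identify the terminal value of the sequence $\tilde S^P_X$ with the quantity on the right-hand side. First I would observe that, by definition, $C^P_X(n)=\tilde S^P_X(n)-\tilde S^P_X(n-1)$ with $\tilde S^P_X(0)=0$, so for any finite $N$ the partial sum collapses:
\[
\sum_{n=1}^N C^P_X(n)=\tilde S^P_X(N)-\tilde S^P_X(0)=\tilde S^P_X(N).
\]
Thus the whole claim reduces to computing $\lim_{N\to\infty}\tilde S^P_X(N)$.

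Next I would invoke the structural fact, noted immediately after \cref{def:nps}, that a nested partition sequence must satisfy $P_n=P_{|X|}$ for all $n\ge|X|$ and that $P_{|X|}$ is necessarily the partition of $X$ into singletons (it has $|X|$ parts, and the only partition of an $|X|$-element set with $|X|$ parts is the discrete one). Consequently, for every $N\ge|X|$,
\[
\tilde S^P_X(N)=\sum_{\chi\in P_{|X|}}H(\chi)=\sum_{x\in X}H(\{x\})=\sum_{x\in X}H(x),
\]
since each part $\chi$ is a singleton $\{x\}$ and $H(\{x\})=H(x)$. In particular $\tilde S^P_X(N)$ is constant for $N\ge|X|$, so the limit exists and equals $\sum_{x\in X}H(x)$.

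Combining the two displays, $\sum_{n=1}^\infty C^P_X(n)=\lim_{N\to\infty}\tilde S^P_X(N)=\sum_{x\in X}H(x)$, which is the sum rule. I expect essentially no serious obstacle here: the only point requiring care is the justification that $P_{|X|}$ is the discrete partition, and that is already spelled out in the remark following \cref{def:nps}. It is also worth remarking in passing (though not strictly needed) that $\tilde S^P_X$ is non-decreasing, so each $C^P_X(n)\ge0$ and the series is a genuine sum of nonnegative terms with only finitely many nonzero, consistent with the codomain $[0,\infty)$ declared in \cref{def:c}.
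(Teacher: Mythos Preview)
Your proof is correct and follows essentially the same approach as the paper: telescope the sum $\sum_n C^P_X(n)$ to $\tilde S^P_X(|X|)$ and then identify this with $\sum_{x\in X}H(x)$ via the fact that $P_{|X|}$ is the discrete partition. You have simply spelled out the steps in more detail than the paper's one-line version.
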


\begin{proof}
$\sum_{n=1}^\infty C^P_X(n)=\lim_{n\rightarrow\infty} \tilde{S}^P_X(n)-\tilde{S}^P_X(0)=\tilde{S}^P_X(|X|)-0=\sum_{x\in X} H(x)$
\end{proof}

Because the complexity at each scale measures the amount of additional information present when different parts of the system are considered separately, the sum of complexity across all scales will simply equal the total information present in the system when each component is considered independently of the rest.  Thus, given fixed individual behaviors of the system components, there is a necessary tradeoff between complexity at larger and smaller scales regardless of which partitioning scheme is used (i.e. regardless of how scale is defined). 

\subsection{Choosing from among the partitioning schemes}
Because of the dependence on the partitioning scheme, \cref{def:c} defines a family of complexity profiles.  That there is no single complexity profile for this definition can be thought of as a consequence of their being no single way to coarse-grain a system.  In other words, implicit in any particular complexity profile of a system is a scheme for describing that system at multiple scales.  While there is no such scheme that is ``the correct scheme'' in an absolute sense, for any particular purpose (and often for almost any conceivable purpose), some schemes are far better than others.  

But before examining this question, we first consider a strong advantage of the multiplicity of complexity profiles: \cref{thm:match} applies to all of them.  Thus, any complexity profile, regardless of the partitioning scheme, can potentially be used to show a multi-scale complexity mismatch between system and environment.  This is useful when one has information about the probability distributions of the system and environment separately but not necessarily on the joint probability distribution of system and environment together, such that one cannot directly determine whether the system matches the environment (since quantities such as $H(y|x)$ would be unknown for any given system component $x$ and environmental component $y$).

Assuming one knows which system components correspond to which environmental components, one can test for potential incompatibility between the system and environment by considering \textit{any} nested partition sequence of any subset of the system and the corresponding subset of the environment, as per theorem~\ref{thm:match} and corollary~\ref{cor:sm}.  Thus, a meaningful comparison of system and environment can be made for a wide variety of complexity profiles, provided the definitions for the system and environment are consistent.  In the likely case that the complexity profiles cannot be precisely calculated, this framework thus supports a wide variety of qualitative complexity profiles that one may wish to construct.  

When the correspondence between system and environmental components is unknown, there are still ways in which to compare the system and the environment.  For instance, if a system $X$ matches its environment $Y$, then 
\begin{equation}
\label{eq:max}
\max_P F(C^P_X)\geq \max_P F(C^P_Y)
\end{equation}
for all functions $F$ that map complexity profiles onto $\mathbb{R}$ and are non-decreasing in $C(n)$ for all scales $n$.  Thus, finding even a single function $F$ for which \cref{eq:max} does not hold is enough to show that $X$ cannot possibly match $Y$, regardless of how they may be connected.  Other such constructions that are independent of the bijection between $X$ and $Y$ are also possible.

However, although any partitioning scheme can be used to show a mismatch between a system and its environment, not all partitioning schemes are equally good choices for gaining an understanding of the structure of the system.  Each part of a partition represents approximating the system by describing only that subset of its components, and so, if the purpose of the complexity profile is to characterize the structure of the system, the partitions should be chosen accordingly.  For instance, for a system $\{x_1,x_2,x_3,x_4\}$ where $x_1=x_2$ and $x_3=x_4$, partitioning the system into $\{x_1,x_2\}$ and $\{x_3,x_4\}$ does not make sense if the goal is to create a reasonably faithful two-component description of the four-component system.  

As a heuristic, successive cuts in a nested partition sequence should cut through random variables with significant mutual information (i.e. significant redundancy), although, of course, taking a greedy algorithm (i.e.. first maximizing complexity at scale $2$, and then choosing the next partition to maximize complexity at scale $3$, given the constraint that it has to be nested within the previous partition, and so on) may not always match system structure.  Nonetheless, this greedy algorithm does at least provide a consistent way to define complexity profiles across various systems such that complexity is decreasing with scale.\footnote{Formally, we can define this complexity profile using the nested partition sequence $P$ that maximizes the complexity profile according to a ``dictionary ordering'' in which $C^{P_1}_X>C^{P_2}_X$ if there exists an $n$ such that $C^{P_1}_X(n)>C^{P_2}_X(n)$ and for all $m<n$, $C^{P_1}_X(m)=C^{P_2}_X(m)$.  Equivalently, this $P$ maximizes $\sum_{n=1}^\infty M^{-n}C_X^P(n)$ for any $M>C_X^P(1)=H(X)$.  However, just because $P$ maximizes the complexity profile for $X$ according to this (or any other) metric does not guarantee that for an environment $(Y,f)$ of $X$, $f(P)$ will maximize the complexity profile of $Y$ according to the same metric.}


\begin{ex}
\label{ex:blocks}
Consider a system $X=\{x_1,x_2,...,x_8\}$ such that $x_1=x_2=x_3$, $x_4=x_5=x_6$, and $x_7=x_8$, but otherwise all components are mutually independent (i.e. $x_1,x_4,x_7$) are all mutually independent (\cref{fig:blocks}).  Then intuitively we expect that $C_X(1)=C_X(2)=H(x_1)+H(x_4)+H(x_7)$, $C_X(3)=H(x_1)+H(x_4)$, and $C_X(n)=0$ for $n>3$.  A nested partition sequence that gives us this complexity profile is $P=(P_1,P_2,P_3,...)$ with $P_1=\{X\}$, $P_2=\{\{x_1,x_4,x_7\},\{x_2, x_3, x_5, x_6, x_8\}\}$, $P_3=\{\{x_1,x_4,x_7\},\{x_2, x_5, x_8\},\{x_3, x_6\}\}$, and where it does not matter which subsequent partitions are used, since each part of $P_3$ contains mutually independent random variables.
\end{ex}

\begin{figure}
\begin{center}
\begin{subfigure}[t]{0.45\textwidth}
\centering
\includegraphics[width=.8\linewidth]{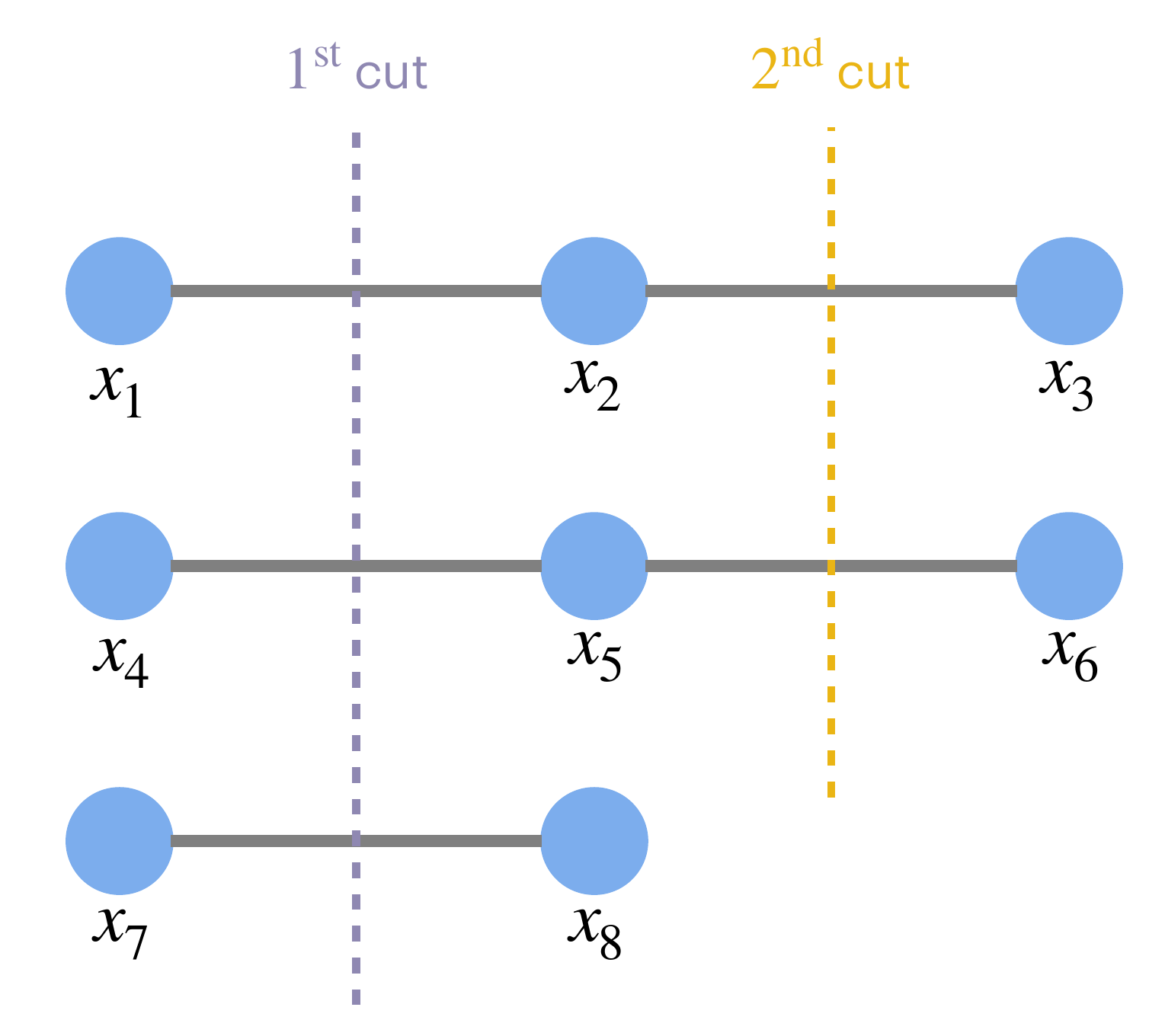}
\caption{}
\label{sf:b}
\end{subfigure}
\hfill
\begin{subfigure}[t]{0.45\textwidth}
\centering
\includegraphics[width=\linewidth]{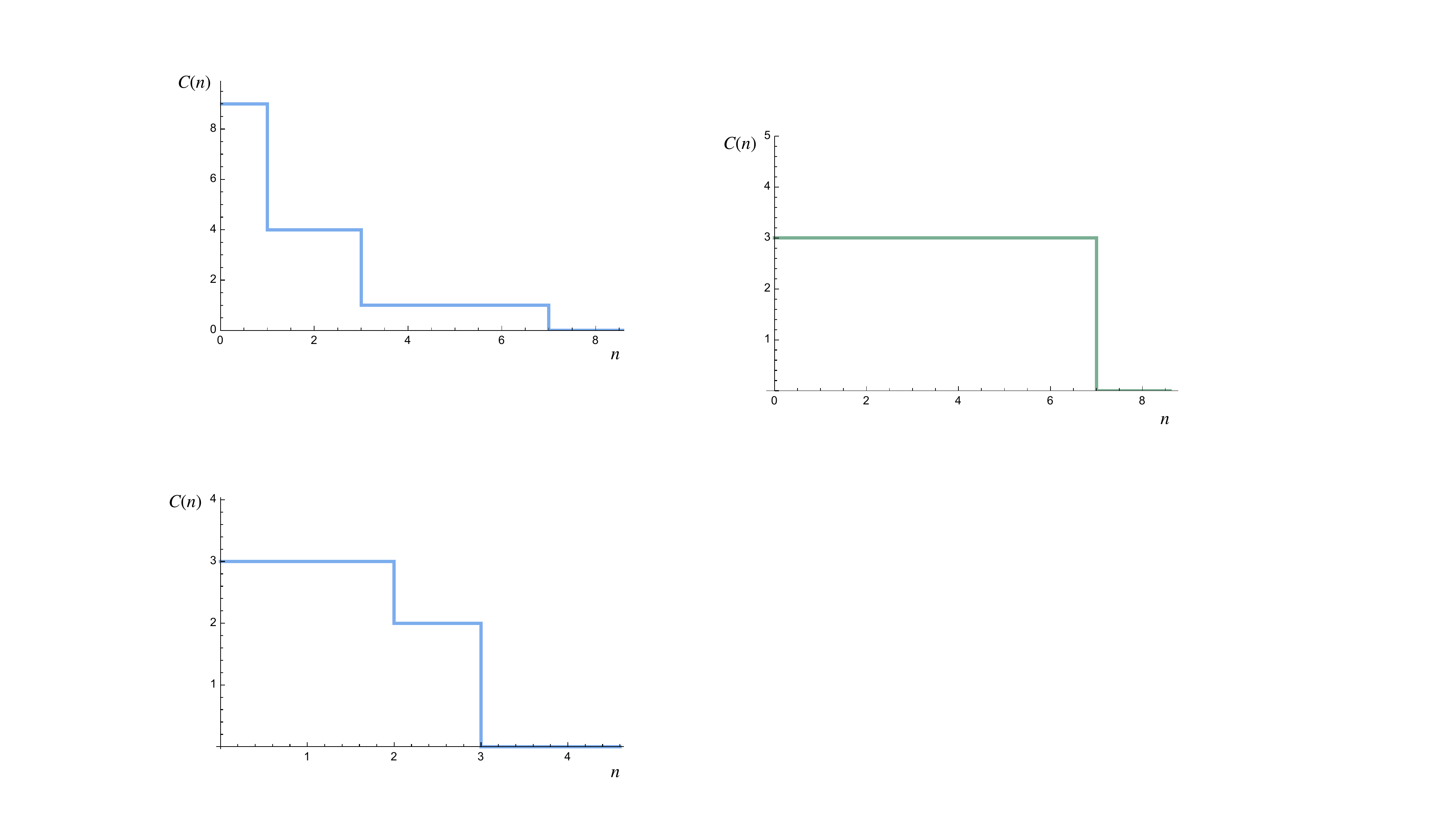}
\caption{}
\label{sf:bcp}
\end{subfigure}
\caption{(\subref{sf:b}) The first and second cuts necessary to create the first three partitions discussed in \cref{ex:blocks} are shown, together with (\subref{sf:bcp}) the resulting complexity profile (made continuous via \cref{eq:ctilde}) if $H(x)=1$ for each $x\in X$.}
\label{fig:blocks}
\end{center}
\end{figure}

\begin{ex}
\label{ex:blanket}
Consider a two dimensional 4x4 grid of random variables in which two variables have nonzero mutual information conditioning on the rest if and only if they are adjacent.  One way to partition the grid that respects its structure is given in \cref{fig:blanket}.
\end{ex}

\begin{figure}
\begin{center}
\begin{subfigure}[t]{0.43\textwidth}
\centering
\includegraphics[width=\linewidth]{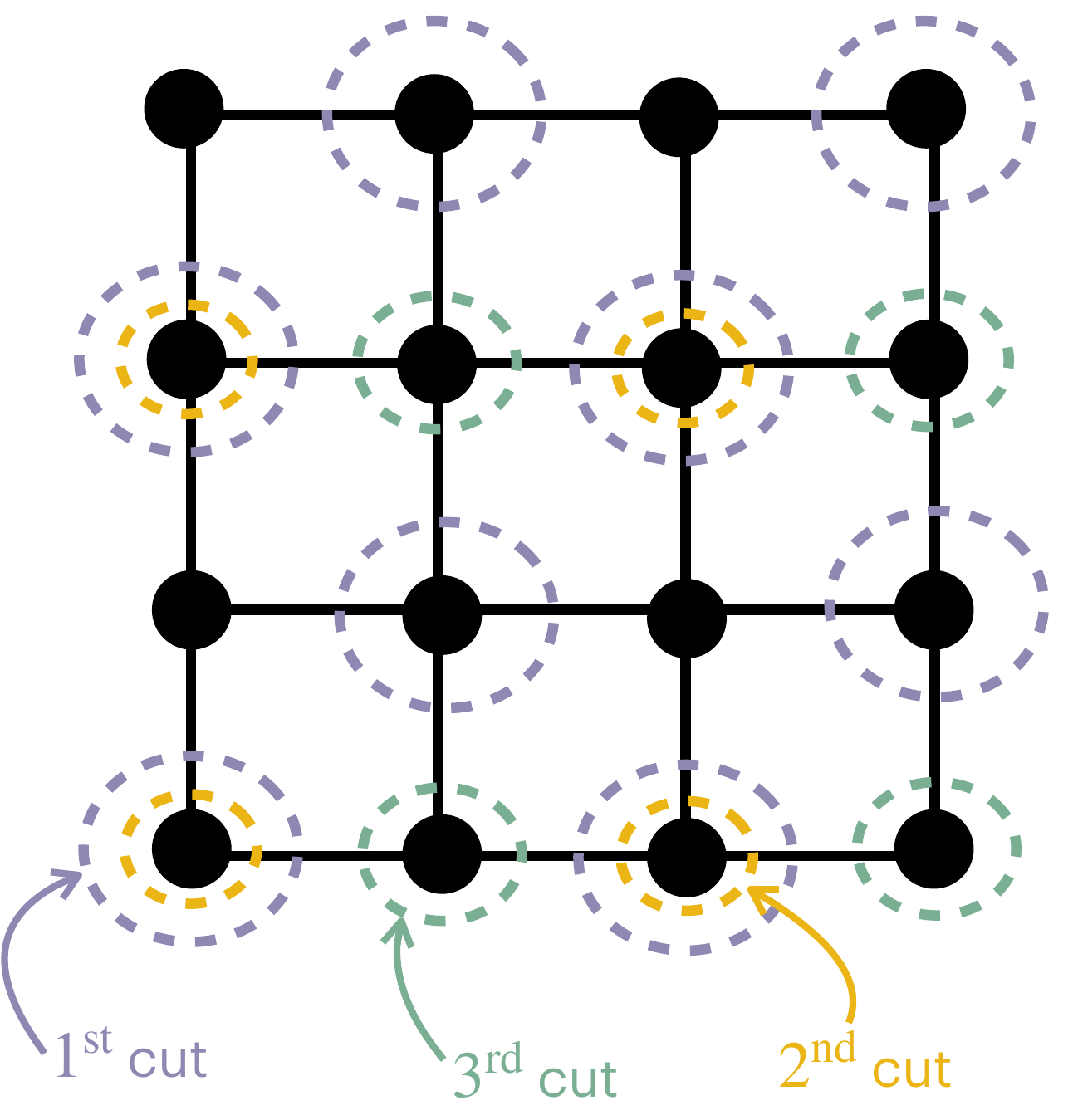}
\caption{}
\label{sf:l}
\end{subfigure}
\hfill
\begin{subfigure}[t]{0.49\textwidth}
\centering
\includegraphics[width=\linewidth]{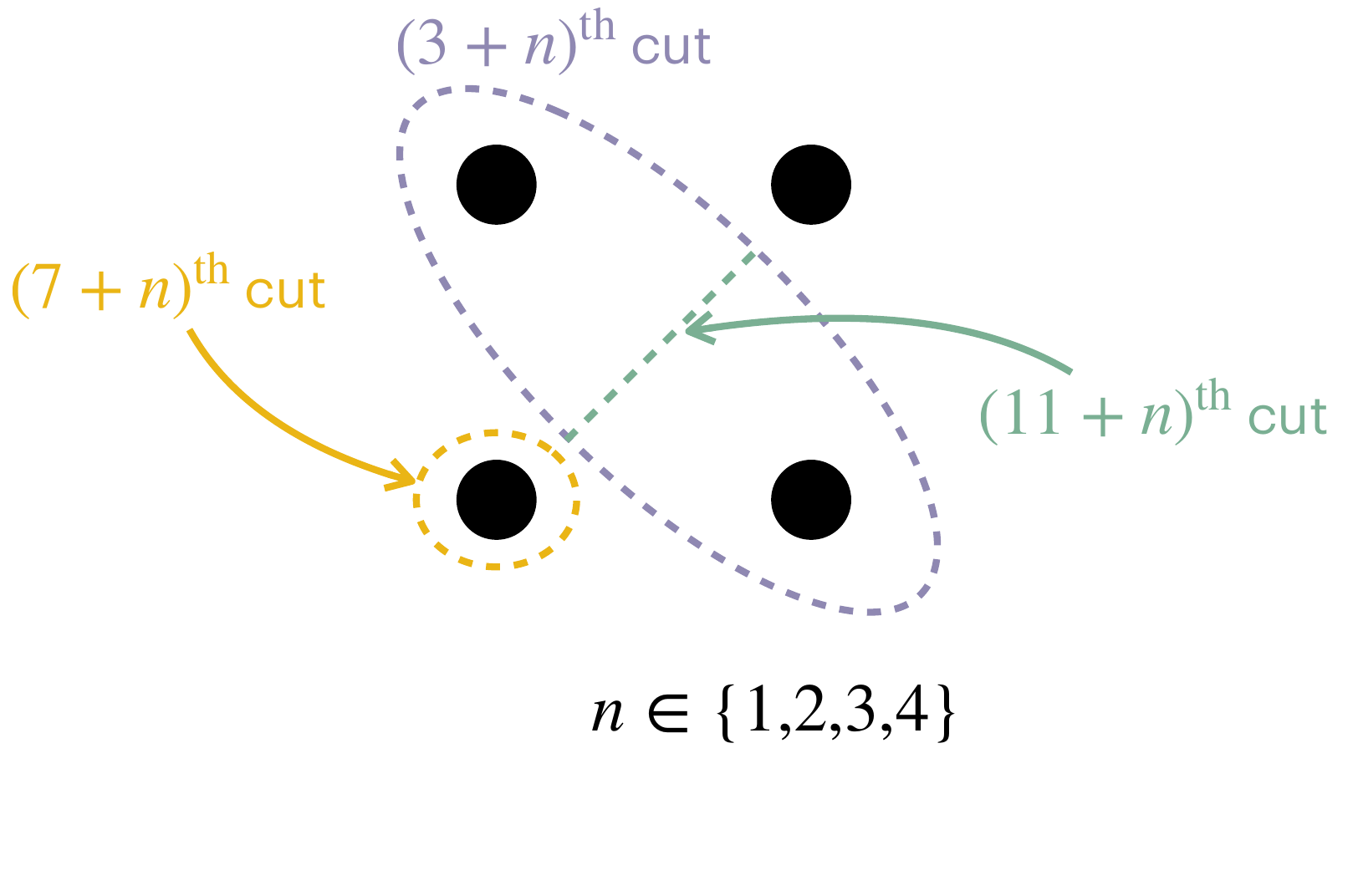}
\caption{}
\label{sf:l4}
\end{subfigure}
\caption{Cuts that will create a nested partition sequence for the 4x4 grid of random variables described in \cref{ex:blanket} are shown.  The first three cuts partition the grid into four parts with four components each, as shown in (\subref{sf:l}).  Each of the four parts (labeled by $n\in\{1,2,3,4\}$) are then subsequently partitioned as shown in (\subref{sf:l4})  .  The resulting complexity profile will of course depend on the nature of the random variables and their correlations.}
\label{fig:blanket}
\end{center}
\end{figure}

\begin{ex}
\label{ex:h}
Consider a hierarchy consisting of 7 individuals: a leader with two subordinates, each of which have two subordinates themselves, as depicted in \cref{fig:h}.  The behavior of each individual is represented by 3 random variables, each with complexity $c$.  Thus if examined separately from the rest of the system, the complexity of each individual is $3c$.  On the left, everyone completely follows the leader, resulting in a complexity of $3c$ up to scale 7 regardless of the partitioning scheme.  On the right, some information is transmitted down the hierarchy but lower levels are also given some autonomy, resulting in more complexity at smaller scales but less at larger scales (but with the same area under the curve, consistent with the sum rule).
\end{ex}

\begin{figure}
\begin{center}
\begin{subfigure}[t]{0.45\textwidth}
\centering
\includegraphics[width=\linewidth]{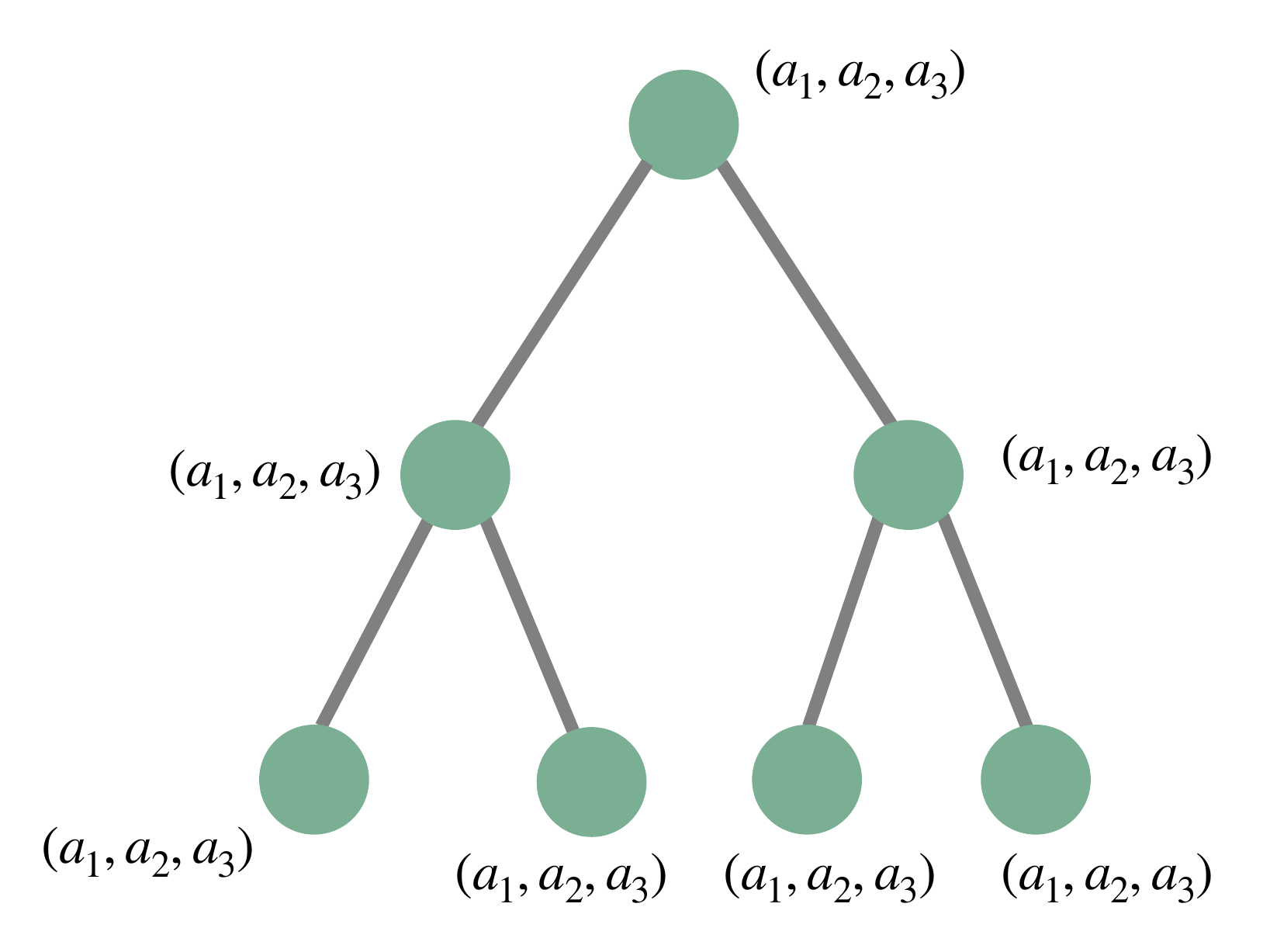}
\caption{}
\label{sf:ha}
\end{subfigure}
\hfill
\begin{subfigure}[t]{0.45\textwidth}
\centering
\includegraphics[width=\linewidth]{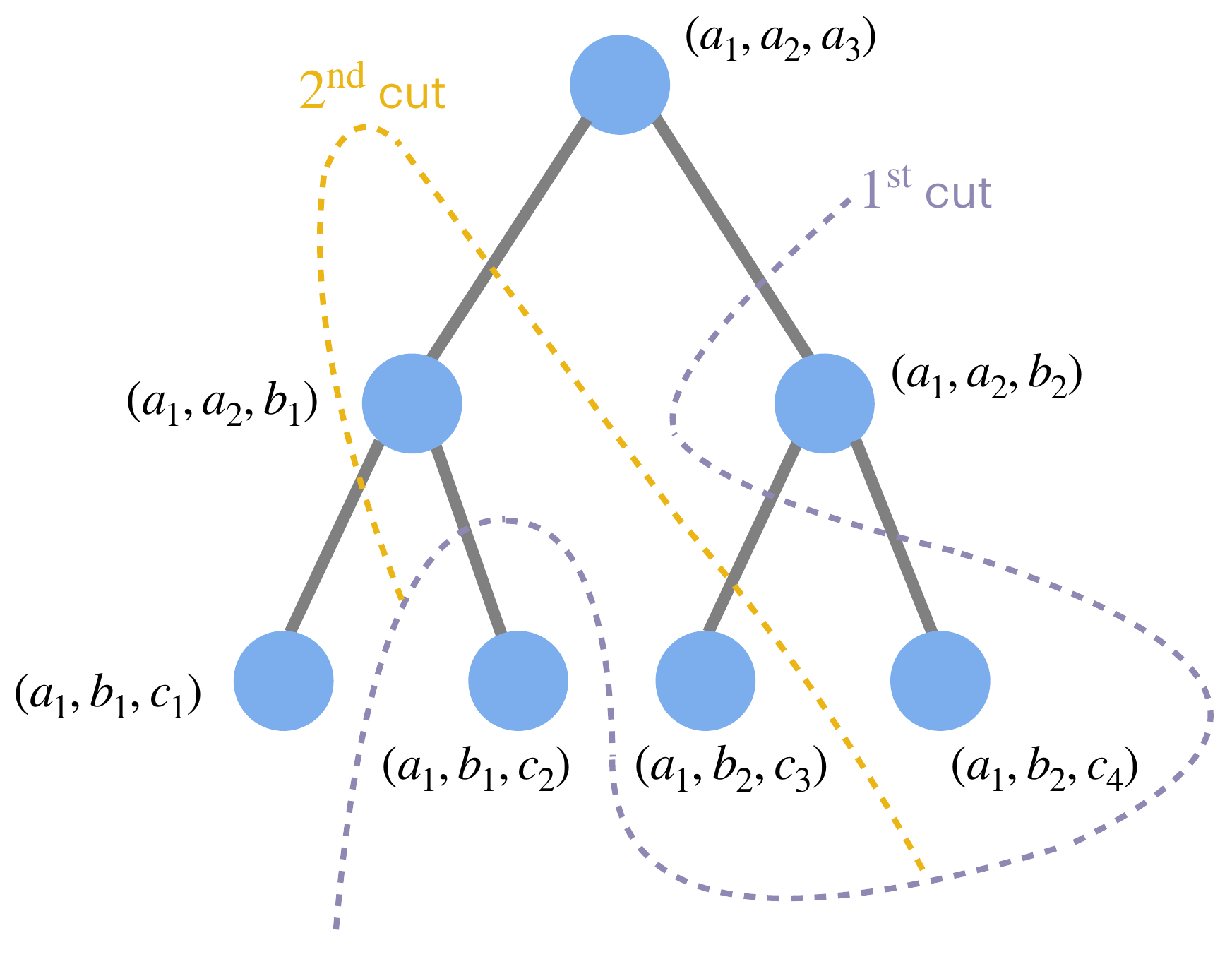}
\caption{}
\label{sf:hb}
\end{subfigure}

\vspace{1em}
\begin{subfigure}[t]{0.45\textwidth}
\centering
\includegraphics[width=\linewidth]{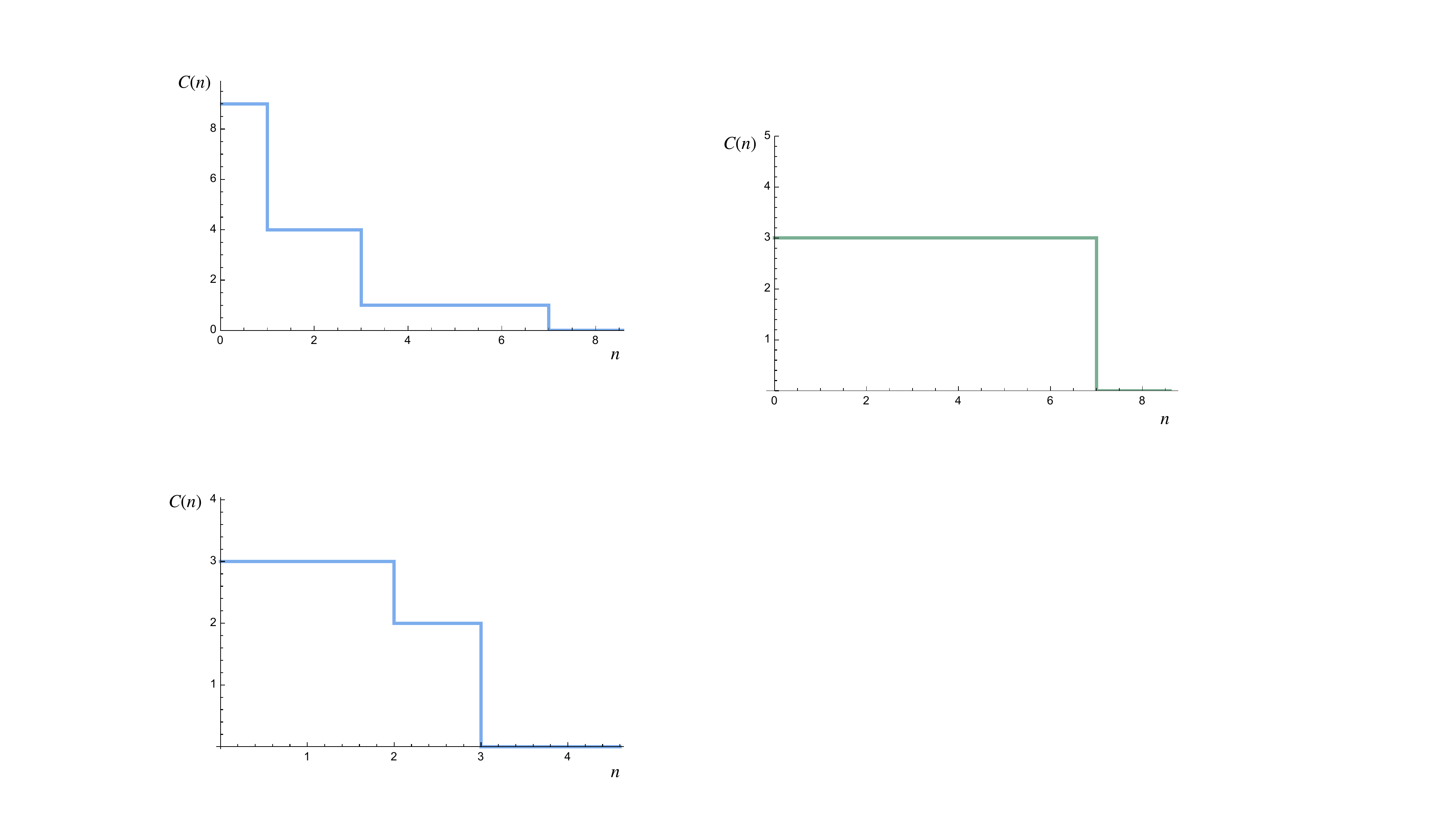}
\caption{}
\label{sf:hc}
\end{subfigure}
\hfill
\begin{subfigure}[t]{0.45\textwidth}
\centering
\includegraphics[width=\linewidth]{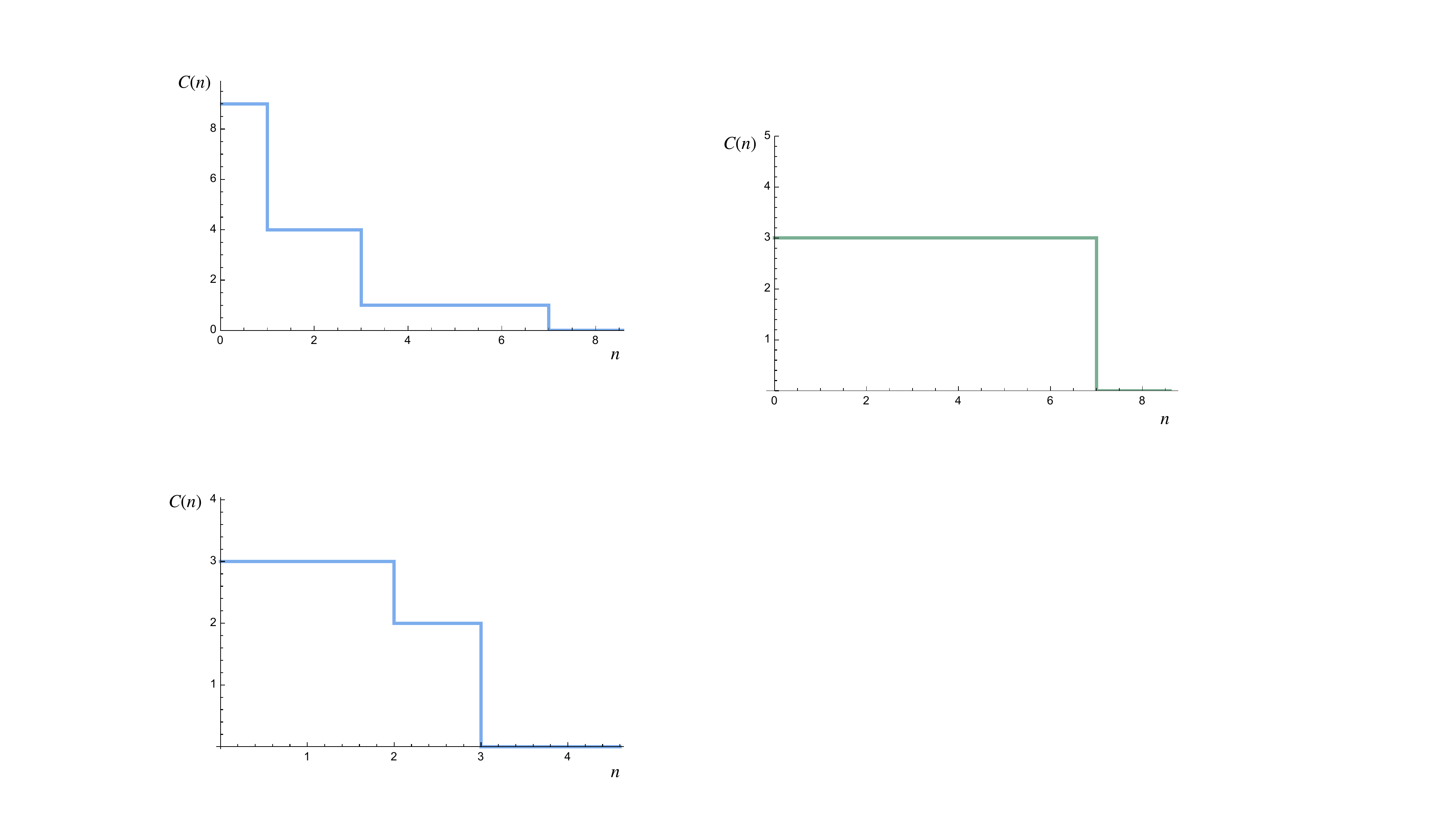}
\caption{}
\label{sf:hd}
\end{subfigure}
\caption{The hierarchies discussed in \cref{ex:h} are shown, together with the resulting complexity profiles (made continuous via \cref{eq:ctilde}) for $c=1$.  The complexity profile in (\subref{sf:hc}) can be obtained from any nested partition sequence for the hierarchy in (\subref{sf:ha}).  The complexity profile in (\subref{sf:hd}) can be obtained from any nested partition sequence for which the first three partitions are given by the displayed cuts in the hierarchy in (\subref{sf:hb}). }
\label{fig:h}
\end{center}
\end{figure} 

\subsection{Combining subsystems}
If a system can be divided into independent subsystems, then the complexity profile of the system as a whole can be written as the sum of complexity profiles of each of the independent subsystems.   And if a system can be divided into $m$ subsystems that behave identically, its complexity profile will equal that of any one of the subsystems except with the scale axis stretched by a factor of $m$.  These properties are made precise below.



\begin{thm}
\label{thm:indep}
Additivity of the complexity profiles of superimposed independent systems.  Suppose two disjoint systems $A$ and $B$ are independent, i.e. the mutual information $I(A;B)=0$, and let $C=A\cup B$.    Consider any nested partition sequences $P^A$ of $A$ and $P^B$ of $B$.  Then for all nested partition sequences $P^C$ that restrict to $P^A$ on $A\subset C$ and to $P^B$ on $B\subset C$,  
\begin{equation}
C^{P^C}_C(n)=C^{P^A}_A(n)+C^{P^B}_B(n)
\end{equation}
In other words, the complexity profiles of independent subsystems add.
\end{thm}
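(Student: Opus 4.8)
The plan is to reduce everything to the unnormalized quantity $\tilde{S}$ of \cref{def:s}. Since $C^P_X(n) = \tilde{S}^P_X(n) - \tilde{S}^P_X(n-1)$ by \cref{def:c} and the convention $\tilde{S}^P_X(0)=0$ is shared by all three systems, it suffices to prove the single identity
\begin{equation*}
\tilde{S}^{P^C}_C(n) = \tilde{S}^{P^A}_A(n) + \tilde{S}^{P^B}_B(n) \qquad \text{for all } n \in \mathbb{Z}^+ ,
\end{equation*}
since subtracting the $n-1$ version from the $n$ version then yields $C^{P^C}_C(n) = C^{P^A}_A(n) + C^{P^B}_B(n)$.

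To prove this identity, fix $n$ and expand $\tilde{S}^{P^C}_C(n) = \sum_{\chi \in P^C_n} H(\chi)$. For each part $\chi \in P^C_n$, write $\chi$ as the disjoint union $(\chi \cap A) \cup (\chi \cap B)$ (these are disjoint since $A,B$ are, and their union is $\chi$ since $\chi \subseteq C$). The hypothesis $I(A;B)=0$ means the joint distribution of $A \cup B$ factors; marginalizing, any $A' \subseteq A$ is independent of any $B' \subseteq B$, so in particular $H(\chi) = H(\chi\cap A) + H(\chi\cap B)$, where the convention $H(\emptyset)=0$ covers parts lying entirely inside $A$ or entirely inside $B$. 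Summing over $\chi$ and dropping the vanishing terms, the $A$-contributions are $\sum H(\alpha)$ over the nonempty sets $\alpha = \chi \cap A$; by the assumption that $P^C$ restricts to $P^A$ on $A$ (i.e.\ $\{\chi \cap A : \chi \in P^C_n\}\setminus\{\emptyset\} = P^A_n$), these sets are exactly the parts of $P^A_n$, so the $A$-contributions total $\tilde{S}^{P^A}_A(n)$. The symmetric argument gives $\tilde{S}^{P^B}_B(n)$ for the $B$-contributions, establishing the identity and hence the theorem.

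I expect the one point needing care is that a part $\chi \in P^C_n$ generally does \emph{not} lie entirely within $A$ or within $B$ --- indeed for $n=1$ the single part straddles both subsystems --- so one must not try to split the sum $\sum_{\chi \in P^C_n} H(\chi)$ into ``$A$-parts'' plus ``$B$-parts''; instead the independence hypothesis must be invoked part-by-part to split each individual entropy $H(\chi)$. I would record the subset-independence fact ($I(A;B)=0 \Rightarrow I(A';B')=0$ for $A'\subseteq A$, $B'\subseteq B$) as a one-line remark, since it is immediate from marginalizing a product distribution. Everything else is routine bookkeeping; and although one could also verify that a qualifying $P^C$ exists (interleave the refining cuts of $P^A$ and $P^B$), this is not needed, as the theorem only quantifies over such $P^C$.
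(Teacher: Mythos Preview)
Your proof is correct and follows exactly the same approach as the paper, which simply cites the restriction hypothesis together with the fact that $H(A'\cup B')=H(A')+H(B')$ for any $A'\subset A$, $B'\subset B$. Your write-up is just a more detailed unpacking of that one-line argument via $\tilde S$, including the careful observation that parts of $P^C_n$ may straddle $A$ and $B$.
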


\begin{proof}
This result follows from $P^C_i$ restricting to $P^A_i$ on $A$ and $P^B_i$ on $B$ and the fact that for any subsets $A'\subset A$, $B'\subset B$, $H(A'\cup B')=H(A')+H(B')$. 
\end{proof}

In order to formulate the second property, we first build up some notation in the following two definitions:

\begin{defn}
\label{def:copy}
For a system $X$ and positive integer $m$, let $m*X=\cup_{i=1}^mX_i$ where the $X_i$ are disjoint systems for which there exist bijections $f_i:X_i\rightarrow X$ such that $\forall x\in X_i$, $H(x|f_i(x))=H(f_i(x)|x)=0$.  In other words, $m*X$ contains $m$ identical copies of $X$, such that the behavior of any one copy completely determines the behavior of all of the others.
\end{defn}

\begin{defn}
\label{def:mp}
Given a nested partition sequence $P$ of a system $X$ and a positive integer $m$, define the nested partition sequence $m*P$ of the system $m*X=\cup_{i=1}^mX_i$ (with the bijections $f_i:X_i\rightarrow X$) as follows. For $n\leq m$, $(m*P)_n\equiv\{\cup_{i=n}^mX_i\} \cup \{X_i: i<n\}=\{X_1,X_2,...,X_{n-1},\cup_{i=n}^mX_i\}$.  For $n\geq m$, define $(m*P)_n$ such that it restricts to $P^{f_i}_{n_i}$ (see \cref{def:fp}) on each $X_i\subset m*X$, where $n_i=\lceil n/m\rceil$ if $i\leq (n \mod m)$ and $n_i=\lfloor n/m \rfloor$ otherwise.
\end{defn}

\begin{thm}
\label{thm:dep}
Scale-additivity of replicated systems.  Let $P$ be any nested partition sequence of a system $X$.  Then 
\begin{equation}
C^{m*P}_{m*X}(n)=C^P_X(\lceil n/m\rceil)
\end{equation}
In other words, the effect of including $m$ exact replicas of $X$ is to stretch the scale axis of the complexity profile by a factor of $m$.
\end{thm}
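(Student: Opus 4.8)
The plan is to reduce the claim to a closed-form evaluation of $\tilde S^{m*P}_{m*X}$, since by \cref{def:c} the complexity profile is just its discrete difference, $C^{m*P}_{m*X}(n) = \tilde S^{m*P}_{m*X}(n) - \tilde S^{m*P}_{m*X}(n-1)$, and likewise $C^P_X(k) = \tilde S^P_X(k) - \tilde S^P_X(k-1)$. So everything comes down to computing $\tilde S^{m*P}_{m*X}(n)$ and then differencing.

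The first ingredient is that the bijections $f_i$ of \cref{def:copy} preserve entropy: since $H(x\mid f_i(x)) = H(f_i(x)\mid x) = 0$ for all $x \in X_i$, any subset $\chi \subseteq X_i$ satisfies $H(\chi) = H(f_i(\chi))$, and the joint entropy of a union of several of these deterministic copies is still just $H(X)$ (each copy is a deterministic function of any other). Because $P^{f_i}$ mirrors $P$ by \cref{def:fp}, this yields $\tilde S^{P^{f_i}}_{X_i}(k) = \tilde S^P_X(k)$ for every $k$.

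Next I would evaluate $\tilde S^{m*P}_{m*X}(n)$ using the two clauses of \cref{def:mp}. For $n \ge m$, the partition $(m*P)_n$ restricts to $P^{f_i}_{n_i}$ on $X_i$, so by \cref{def:s} and the entropy-preservation fact, $\tilde S^{m*P}_{m*X}(n) = \sum_{i=1}^m \tilde S^{P^{f_i}}_{X_i}(n_i) = \sum_{i=1}^m \tilde S^P_X(n_i)$, where by construction the multiset $\{n_i\}$ consists of $r := n \bmod m$ copies of $\lceil n/m\rceil$ and $m-r$ copies of $\lfloor n/m\rfloor$. For $n \le m$, $(m*P)_n$ instead consists of the $n-1$ full copies $X_1,\dots,X_{n-1}$ together with the merged block $\bigcup_{i=n}^m X_i$; each of these $n$ parts has entropy $H(X)$, so $\tilde S^{m*P}_{m*X}(n) = n\,H(X)$. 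Using $\tilde S^P_X(0) = 0$ and $\tilde S^P_X(1) = H(X)$, both regimes collapse into the single formula $\tilde S^{m*P}_{m*X}(n) = \sum_{i=1}^m \tilde S^P_X(n_i)$ with the same even-distribution rule for the $n_i$ (one should check the two clauses agree at the boundary $n=m$, which they do).

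Finally I would difference this. The key observation is that when $n$ increases by one, exactly one of the $m$ ``buckets'' $n_i$ increases, and it increases from $\lceil n/m\rceil - 1$ to $\lceil n/m\rceil$; this follows from the distribution rule by a short check distinguishing $n \not\equiv 0$ from $n \equiv 0 \pmod m$ (in the latter case it is the last bucket, going from $q-1$ to $q$ with $n=qm$). Letting $n_i'$ denote the bucket values at $n-1$, we get
\[ C^{m*P}_{m*X}(n) = \sum_{i=1}^m \bigl(\tilde S^P_X(n_i) - \tilde S^P_X(n_i')\bigr) = \tilde S^P_X(\lceil n/m\rceil) - \tilde S^P_X(\lceil n/m\rceil - 1) = C^P_X(\lceil n/m\rceil), \]
which is the assertion of \cref{thm:dep}. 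There is no conceptual obstacle: the work is entirely in the floor/ceiling bookkeeping of this last step, in reconciling the $n \le m$ and $n \ge m$ descriptions of $(m*P)_n$ at $n = m$, and (if needed) in confirming that $m*P$ is a genuine nested partition sequence so that \cref{def:c} applies.
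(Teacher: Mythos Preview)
Your proposal is correct and is precisely the detailed unpacking that the paper leaves implicit: the paper's own proof consists of the single sentence ``This result follows from definitions~\ref{def:copy} and~\ref{def:mp},'' and your computation of $\tilde S^{m*P}_{m*X}(n)$ in the two regimes, the reconciliation at $n=m$, and the bucket-differencing argument are exactly how one verifies that claim.
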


\begin{proof}
This result follows from definitions~\ref{def:copy}~and~\ref{def:mp}.
\end{proof}

Theorems~\ref{thm:indep} and~\ref{thm:dep} indicate that for any block-independent system $X$, there exists a nested partition sequence that yields the same complexity profile as that given by the formalism in refs.~\cite{allen2017multiscale}~and~\cite{original}.\footnote{The formalism in ref.~\cite{allen2017multiscale}/ref.~\cite{original} is stated in ref.~\cite{original} to be the only such formalism that is a linear combination of entropies of subsets of the system, that yields its results for block-independent systems, and that is symmetric with respect to permutations of the components.  The partition formalism in this paper does not contradict this statement, since any partitioning scheme will break this permutation symmetry for systems with greater than two components.}

\section{Conclusion}
\label{sec:conclusion}
The motivation behind our analysis has been to construct a definition of a complexity profile for multi-component systems that obeys both the sum rule and a multi-scale version of the law of requisite variety.  In order to do so, we first had to generalize the law of requisite variety to multi-component systems.  We then created a formal definition for a complexity profile and defined two properties---the multi-scale law of requisite variety and the sum rule---that complexity profiles should satisfy.  Finally, we constructed a class of examples of complexity profiles and proved that they satisfy these properties.  We demonstrate their application to a few simple systems and show how they behave when independent and dependent subsystems are combined.  

This formalism is purely descriptive, in that questions of causal influence and mechanism (i.e. what determines the states of each component) are not considered; rather only the possible states of the system and its environment and correlations among these states are considered.   
(Cf. statistical physics not considering the Newtonian dynamics of individual gas molecules but rather only the probabilities of finding the gas in any given state.)  
By abstracting out notions of causality and mechanism, this approach allows for an understanding of the space of all possible system behaviors and for an identification of systems that are doomed to failure regardless of mechanism.   
The way in which system and environmental components are mechanistically linked and the evolution and adaptability of complex systems over time are directions for future work.

Complexity profiles other than those presented \cref{sec:class} may exist, including profiles that take advantage of some known structure of the systems under consideration; for instance, for systems that can be embedded into $\mathbb{R}^d$ where $d$ is far lower than the number of system components, Fourier methods could be explored.  More broadly, the sum rule could be relaxed: while completely eliminating any tradeoff of complexity among scales would likely lead to under-constrained profiles---certainly, smaller-scale complexity must be reduced in order to create larger-scale structure---formalizations other than $\sum_n C_X(n)=\sum_{x\in X}H(x)$ could be considered.  One could also imagine other definitions of what it means for a system to match its environment.  But just as the sum rule could be modified but the tradeoff of complexity among scales should nonetheless be manifest, some sort of multi-scale law of requisite variety is necessary if the complexity profiles of multiple systems are to be meaningfully compared.

At the very least, the multi-scale law of requisite variety and sum-rule-constrained profiles provide a formal grounding that supports the conceptual implications of scale-dependent complexity.  Our hope is that these formalisms spur further development in our understanding of the general properties of multi-component systems. 

\subsection*{Acknowledgements}
This material is based upon work supported by the National Science Foundation Graduate Research Fellowship under Grant No. 1122374, the Hertz Foundation, and the Long-Term Future Fund.  We would also like to thank Alex Zhu and Robi Bhattacharjee for helpful discussions. 

\beginsupplement

\section*{Appendix}

\section{Proofs}
\label{sec:proofs}
In order to prove \cref{thm:match}, we first prove the following lemma.  

\begin{lemma} 
\label{lem:i}
If $H(B_1|A_1)=H(B_2|A_2)=0$, then $I(A_1;A_2)\geq I(B_1;B_2)$.
\end{lemma}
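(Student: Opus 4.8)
The plan is to show that bundling $B_1$ together with $A_1$ and $B_2$ together with $A_2$ does not change the mutual information, and then to discard $A_1$ and $A_2$, which can only decrease it. The two ingredients are the chain rule for mutual information and the non-negativity of (conditional) mutual information.

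First I would record the elementary fact that whenever $H(B\mid A)=0$, we have $I((A,B);Z)=I(A;Z)$ for any random variable or collection of random variables $Z$. This is immediate from the chain rule $I((A,B);Z)=I(A;Z)+I(B;Z\mid A)$ together with $0\le I(B;Z\mid A)\le H(B\mid A)=0$. Applying this twice — once with $A=A_1,\ B=B_1,\ Z=A_2$, and once with $A=A_2,\ B=B_2,\ Z=(A_1,B_1)$ (using symmetry of mutual information) — gives
\[
I(A_1;A_2)=I((A_1,B_1);A_2)=I\big((A_1,B_1);(A_2,B_2)\big).
\]

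Next I would invoke monotonicity: $I\big((A_1,B_1);(A_2,B_2)\big)\ge I(B_1;B_2)$. Unwinding via the chain rule, $I\big((A_1,B_1);(A_2,B_2)\big)=I\big(B_1;(A_2,B_2)\big)+I\big(A_1;(A_2,B_2)\mid B_1\big)\ge I\big(B_1;(A_2,B_2)\big)=I(B_1;B_2)+I(B_1;A_2\mid B_2)\ge I(B_1;B_2)$, where each inequality is just non-negativity of conditional mutual information. Chaining this with the previous display yields $I(A_1;A_2)\ge I(B_1;B_2)$, which is the claim.

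I do not expect a serious obstacle here; the result is a standard consequence of the data-processing inequality, and indeed an alternative route I would mention is to note that $B_1 - A_1 - A_2 - B_2$ is a Markov chain (since $B_1$ and $B_2$ are deterministic functions of $A_1$ and $A_2$ respectively, each is conditionally independent of the rest given its ``parent''), so data processing applies directly. The only points needing a little care are that the mutual informations are taken between \emph{sets} of random variables — harmless once $I(\cdot;\cdot)$ is read as mutual information between the corresponding joint distributions — and keeping the conditioning variables straight across the two applications of the chain rule.
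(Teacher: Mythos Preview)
Your proof is correct. Both arguments ultimately rest on the chain rule and the non-negativity of conditional mutual information, and if one expands your second step via $I(A_1;(A_2,B_2)\mid B_1)=I(A_1;B_2\mid B_1)+I(A_1;A_2\mid B_1,B_2)$, one recovers exactly the paper's decomposition
\[
I(A_1;A_2)-I(B_1;B_2)=I(A_1;A_2\mid B_1,B_2)+I(A_1;B_2\mid B_1)+I(A_2;B_1\mid B_2)\ge 0.
\]
The difference is in packaging: the paper works at the entropy level, rewriting each $H(A_i)$ and $H(A_1,A_2)$ through $B_i$ and then reassembling, whereas you isolate the reusable sublemma ``$H(B\mid A)=0\Rightarrow I((A,B);Z)=I(A;Z)$'' and argue directly with mutual-information chain rules, which is a bit cleaner and makes the two conceptual steps (bundling, then monotonicity) explicit. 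Your aside that $B_1-A_1-A_2-B_2$ is a Markov chain and hence the result is an instance of the data-processing inequality is also valid and is a nice one-line alternative that the paper does not mention.
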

\begin{proof}
\begin{align*}
&I(A_1;A_2)=H(A_1)+H(A_2)-H(A_1,A_2)=\\
&H(B_1)+H(A_1|B_1)-H(B_1|A_1)+H(B_2)+H(A_2|B_2)-H(B_2|A_2)\\
&-H(B_1,B_2) -H(A_1,A_2|B_1,B_2)+H(B_1,B_2|A_1,A_2)=\\
&I(B_1;B_2)+H(A_1|B_1)+H(A_2|B_2)-H(A_1,A_2|B_1,B_2)=\\
&I(B_1;B_2)+I(A_1;A_2|B_1;B_2)+I(A_1;B_2|B_1)+I(A_2;B_1|B_2)\geq I(B_1;B_2)
\end{align*}  
\end{proof}

We now prove \textbf{Theorem~\ref{thm:match}}:
\textit{
Multi-scale law of requisite variety.  If a system $X$ matches its environment $(Y,f)$, then for all nested partition sequences $P$ of $X$, $C^P_X(n)\geq C^{P^f}_Y(n)$ at each scale $n$.}\newline

\begin{proof}
For any collections of random variables $A=\{a_1,..,a_N\}$ and $B=\{b_1,...,b_N\}$, $\forall i~H(b_i|a_i)=0$  implies that $0\leq H(B|A)\leq\sum_i H(b_i|A)\leq\sum_i H(b_i|a_i)=0$ and thus $H(B|A)=0$.  Using this fact together with remark~\ref{lem:C} and lemma~\ref{lem:i}, we obtain $C^P_X(n)\geq C^{P^f}_Y(n)$ for $n\in\mathbb{Z}^+$.  Note that $X$ and $Y$ being partitioned in the same way guarantees that for a subset $A\subset X$ and the corresponding subset $B\subset Y$, $H(B|A)=0$.
\end{proof}

\section{Continuum limit}
\label{sec:continuum}
Complexity profiles can also be defined for continuous systems.  

\begin{defn}
\label{def:cont}
We define a continuous system $X$ of size $L$ as a sequence of discrete systems $\{X_i\}_{i=1}^\infty$ with components of size $l_i\equiv L/|X_i|$ such that $X_i\subset X_j$ whenever $i<j$ and $\lim_{i\rightarrow\infty}l_i=0$. Then the complexity profile for the continuous system $X$ is defined as 
\begin{equation}
\label{eq:cont}
C_X(s)\equiv  \lim_{i\rightarrow\infty} \tilde C_{X_i}(s)= \lim_{i\rightarrow\infty}  C_{X_i}(\lceil s/l_i \rceil)
\end{equation}
provided such a limit exists, where $\tilde C_{X_i}(s)$ is defined in \cref{eq:ctilde}.
\end{defn}

\begin{remark}
\label{re:discrete}
Note that any discrete system $X$ with complexity profile $C_X(n)$ and components of size $l$ can be considered as a continuous system of size $|X|l$ per \cref{def:cont} by defining the systems $\{X_i\}_{i=1}^\infty$ (with components of size $l/i$) such that $X_i=i*X$ (\cref{def:copy}) and 
\begin{equation}
\label{eq:discrete}
\tilde C_{X_i}(s)\equiv \tilde C_X(s)=C_X(\lceil s/l\rceil)=C_{X_i}(\lceil is/l\rceil)
\end{equation}
(see \cref{eq:ctilde}).
\end{remark}

\begin{ex}
Suppose that the continuous system $X$ of size $L$ is a random continuous function $f(x)$ for $x\in [0,L]$.  Define $X_i=\{f(L/2^i),f(2L/2^i),f(3L/2^i),...,f(2^iL/2^i)\}$, so that $X_i$ has $2^i$ components, each of scale $L/2^i$.  Then $X$ can be described by the sequence $\{X_i\}_{i=1}^\infty$.
\end{ex}

We can extend the class of complexity profiles defined in \cref{sec:class} as follows.

For any nested partition sequence $P$ of a discrete system $X$ with components of size $l$, the complexity profiles in \cref{eq:discrete} can be realized by letting $C_X(n)=C^P_X(n)$ and $C_{X_i}(n)=C^{i*P}_{X_i}(n)$ (see \cref{def:mp}), since by theorem~\ref{thm:dep}, 
\begin{equation}
C^{i*P}_{X_i}(\lceil is/l\rceil)=C^P_X(\lceil s/l\rceil)
\end{equation}

To define a partition-based complexity profile using \cref{eq:cont} for a continuous system $X$ (defined by an infinite sequence of discrete systems $X_1\subset X_2\subset X_3\subset X_4...$, as per \cref{def:cont}), a nested partition sequence $P^i$ must be chosen for each $X_i$.  Of course, these nested partition sequences must be chosen so that the limit in \cref{eq:cont} exists; for consistency, it can also be required that on each $X_i\subset X_j$, each partition $P^j_n$ of $X_j$ restricts to $P^i_m$ of $X_i$ for some $m\leq n$.

\bibliography{refs}{}
\bibliographystyle{unsrt}

\end{document}